\newtheorem{theorem}{Theorem}
\newtheorem{corollary}[theorem]{Corollary}
\newtheorem{definition}[theorem]{Definition}
\newtheorem{lemma}[theorem]{Lemma}
\newtheorem{proposition}[theorem]{Proposition}
\newtheorem{remark}[theorem]{Remark}
\newcommand{\nix}[1]{}
\DeclareMathOperator{\ord}{ord} 
\newenvironment{proof}[1][Proof]{\textbf{#1.} }{\ \rule{0.5em}{0.5em}}
\author{Kenza Guenda
\footnote{Faculty of Mathematics USTHB, University of Sciences
    and Technology of
    Algiers, B.P 32 El Alia, Bab Ezzouar, Algiers, Algeria}} 
\title{New $MDS$ Self-Dual Codes over Large Finite Fields}
\begin{document}
\maketitle
\begin{abstract}
Nous avons construit des codes  $MDS$ qui sont auto-duaux au sens Euclidiens et Hermitiens sur de grands  corps finis. Nos codes sont d\'eriv\'es des codes duadiques cycliques et n\'egacycliques.  
\end{abstract}
\begin{center}
\textbf{Abstract}
\end{center}

We construct $MDS$ Euclidean and Hermitian self-dual codes over large
finite fields of odd and even characteristics. Our codes arise from
cyclic and negacyclic duadic codes.

\section{Introduction}
Let $q$ be a prime power, $\mathbb{F}_q$ a finite field with $q$
elements. An $[n,k]$ linear code $C$ over $\mathbb{F}_q$ is a
$k$-dimensional subspace of $\mathbb{F}_q^n$. A linear code of $\mathbb{F}_q^n$ is said to be constacyclic if it is an
ideal of the quotient ring $R_n=\frac{\mathbb{F}_q[x]}{x^n-a}$. When $a =1$
the code is called cyclic and when $a=-1$ the codes is called negacyclic. For $\mathsf{x} \in C$,
the Hamming weight $wt(\mathsf{x})$ is the number of non zeros
coordinates in $wt(\mathsf{x})$. The minimum distance $d$ of $C$ is
defined as $d=\min\{wt(\mathsf{x}):\, 0 \neq \mathsf{x} \in C\}$.  If
the parameters of the code $C$ verify $n-k+1=d$, then the code is said  to be
maximum distance separable ($MDS$). The minimum distance of a code
is related to its capacity of correctability. For that the $MDS$ codes are optimum in this
sense. Furthermore, the $MDS$ codes find application in algebraic
geometry~\cite{lachaud}. They are also related to geometric objects
called $n$-arcs and to combinatorial objects called orthogonal
arrays~\cite[Ch. 11]{macwilliams}.

 The Euclidean dual code $C^{\bot}$ of the code $C$ is defined as 
$C^{\bot}=\{\mathsf{x} \in \mathbb{F}_q^n: \sum_{i=1}^{n}x_iy_i=0\,
,\forall \mathsf{y} \in C\}$.
If $q=p^2$ the Hermitian dual code $C^{\bot h}$ of  $C$
 is defined as $C^{\bot h}=\{\mathsf{x} \in
\mathbb{F}_q^n: \sum_{i=1}^{n}x_iy_i^p=0\,, \forall \mathsf{y} \in C\}$. 
A code is called Euclidean self-dual or Hermitian self-dual if it
satisfies $C=C^{\bot}$ or $C=C^{\bot h}$. For $q\equiv 1 \mod 4$ a
self-dual code over $\mathbb{F}_q$ exists if and only if $n$ is even,
and for $q\equiv 3 \mod 4$ a self-dual code over $\mathbb{F}_q$ exists
if and only if $n\equiv 0 \mod 4$~\cite[Ch. 19]{macwilliams}. The
linear codes which are close to the Gilbert-Varshamov bound are good
and interesting for practical uses. It turns out that the self-dual codes 
codes satisfies a modified Gilbert-Varshamov-bound~\cite{pierce}. This
makes this family of codes a very attractable family.  

This paper is devoted to the construction of $MDS$ Euclidean and
Hermitian self-dual codes from cyclic duadic codes and negacyclic
duadic codes. Our results, also can be seen as construction of $MDS$ self-dual codes over large fields. This subject
is at the heart of many recent research
papers~\cite{betsumiya,gulliver09,kotsireas}. Whereas, all these constructions care about the computational complexity, our does not.
For that, we reach optimum parameters.
%The cyclic duadic codes were first given over $\mathbb{F}_2$ by
%Leon et al. \cite{leon84}. Then they were generalized by Pless
%\cite{vera86,vera} and
%by Smid \cite{smid87}.
 
Recently, Gulliver et
al.~\cite{gulliver08} constructed $MDS$ self-dual codes of length $q$ over
$\mathbb{F}_q$ from extended Reed-Solomon codes, whenever $q=2^m$. It
turns out, that these codes are extended duadic codes. For $q$ odd the
construction of~\cite{gulliver08} is impossible~\cite{macwilliams}
p. 598. Our construction is more general for fields of odd or even
characteristics, and allows us to construct $MDS$ Euclidean and
Hermitian self-dual codes from the cyclic duadic codes with various lengths. Blackford~\cite{blackford} studied the
negacyclic codes over finite fields by using the multipliers. He gave conditions on the
existence of Euclidean self-dual codes. We generalize his work to the Hermitian
case. We give necessary and sufficient conditions on the existence of
Hermitian self-dual codes from the negacyclic codes. Hence, by using
our previous results, the decomposition of the polynomial $x^n+1$ and
the results of Blackford we construct new
$MDS$ Euclidean and Hermitian self-dual codes from the negacyclic
duadic codes. Furthermore, we give
conditions on the existence of Euclidean self-dual codes which are extended
negacylic.        

The paper is organized as follows: In Section 2 we construct
$MDS$ self-dual codes (Euclidean and Hermitian) from cyclic
duadic codes. First we give cyclic $MDS$ codes over
$\mathbb{F}_q$,
when $n$ divides $q-1$ and $n$ divides $q^2+1$. Furthermore, by using a result from~\cite{guenda} on the existence of
the $\mu_{-q}$ splitting we give 
extended duadic codes which are new $MDS$ Euclidean or Hermitian
self-dual  codes. 

In Section 3 we generalize the work of~\cite{blackford} to the
Hermitian cases. We give necessary and sufficient condition on the
existence of negacylic Hermitian self-dual codes. We construct
negacylic $MDS$ self-dual codes for both the Euclidean and the Hermitian cases.

Several examples have been given from of our results paper. Some of
them reach the best known bounds or even exceed the one given by Kim and
Lee~\cite[Table 1]{kim} or the ones in~\cite{betsumiya,gulliver09,kotsireas}. 
We close by giving a table of $MDS$ self-dual codes of length 18 over prime fields.
\section{Duadic $MDS$ Self-Dual Codes}
%The $MDS$ conjecture claims, that the maximal length of an $MDS$ code
%over $\mathbb{F}_q$ is $q+1$, except in the case $q$ even and $k \in
%\{3,q-1\}$,where the maximal length is known to be $q+2$.
Assmus and Mattson~\cite{assmus} have shown that every cyclic code of
 prime length $t$ over $\mathbb{F}_{p^i}$ is $MDS$ for all $i$, except
 a finite number of primes $p$. For cyclic codes of composite length,
Pedersen and Dahl~\cite{pederson} proved that when $n$ divides
 $q=p^h$, there is no-trivial $MDS$ cyclic code
 over $\mathbb{F}_{p^h}$ if and only if $h=1$. In this case any cyclic
 code is $MDS$ and with generator polynomial $g(x)=(x-1)^{p-k}$. For the previous reasons
we consider only the case $(n,q)=1$ and $q$ a prime power. The integer $n$ can be prime or composite and we propose the following Lemma.
\begin{lemma}
\label{number1}
Let $q$ be a prime power. Then if
$n$ divides $q-1$, the polynomial $g_j(x)=\prod_{i=j}^{n-k+j
-1}(x-\alpha^i)$ generate generate an $MDS$ code.
\end{lemma}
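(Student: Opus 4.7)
The plan is to view the code generated by $g_j$ as a cyclic Reed–Solomon-type code and bound its minimum distance from both sides. Since $n \mid q-1$, the field $\mathbb{F}_q$ contains a primitive $n$-th root of unity $\alpha$, so $x^n-1$ splits completely over $\mathbb{F}_q$ as $\prod_{i=0}^{n-1}(x-\alpha^i)$. Consequently $g_j(x)=\prod_{i=j}^{n-k+j-1}(x-\alpha^i)$ is a divisor of $x^n-1$ in $\mathbb{F}_q[x]$ and it is legitimately the generator polynomial of a cyclic code $C_j\subseteq R_n$.

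Next I would read off the dimension and apply the BCH bound. Since $\deg g_j = n-k$, the code $C_j$ has dimension $n - \deg g_j = k$. The defining set of $C_j$ is the block of $n-k$ consecutive powers $\{\alpha^j,\alpha^{j+1},\ldots,\alpha^{j+n-k-1}\}$, so the BCH bound gives
\begin{equation*}
d(C_j) \;\geq\; (n-k)+1 \;=\; n-k+1.
\end{equation*}
The standard Singleton bound supplies the matching upper bound $d(C_j)\leq n-k+1$, and equality forces $C_j$ to be $MDS$.

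I do not expect a real obstacle here: the key point is only that $\alpha$ lies in $\mathbb{F}_q$ itself (not just a proper extension), so the $n-k$ roots of $g_j$ are genuinely consecutive powers of a primitive $n$-th root of unity and the BCH bound applies directly without any cyclotomic-coset argument. If anything needs justification, it is the BCH bound itself, which I would cite from a standard reference such as \cite{macwilliams}; and the observation that the choice of starting index $j$ is immaterial, since cyclically shifting the defining set by $j$ only permutes the coordinate positions via the monomial transformation $c(x)\mapsto c(\alpha^{-j}x)$, leaving both the dimension and the weight distribution unchanged.
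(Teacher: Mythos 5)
Your proof is correct and follows essentially the same route as the paper: the observation that $n \mid q-1$ forces $x^n-1$ to split over $\mathbb{F}_q$ (equivalently, each $q$-cyclotomic class modulo $n$ is a singleton), so $g_j$ generates a cyclic code of dimension $k$ whose $n-k$ consecutive roots give $d \geq n-k+1$ by the BCH bound, with the Singleton bound supplying equality. Your additional remarks on the legitimacy of $g_j$ as a generator polynomial and the irrelevance of the starting index $j$ are sound but not needed beyond what the paper's argument already contains.
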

\begin{proof}
If $n$ divides $q-1$ i.e., $\textrm{ord }_n(q)=1$, then each
cyclotomic class modulo $n$ contains exactly one element. For a fixed
$k>0$, and $\alpha$ an $n^{th}$ root of unity the  polynomial
$g_j(x)=\prod_{i=j}^{n-k+j
-1}(x-\alpha^i)$ generate an $MDS$ cyclic code,
and that because $g$ has $n-k$ consecutive roots, by the $BCH$ bound
the minimum distance $d$ is such that $d \geq n-k+1$, and then we have
the equality by the Singleton bound.
\end{proof}
\subsection{Euclidean Self-dual $MDS$ Codes over $\mathbb{F}_q$}
This section shows that there exists $MDS$ Euclidean self-dual codes over
$\mathbb{F}_q$ and which arise from cyclic duadic codes.

The following Lemma is useful for the next.
\begin{lemma}\label{lem:Th.4.4.9}
(\cite[Proposition 4.7,Theorem 4.4.9]{lilibeth,huffman03}),
Let $C$ be an $[n,k]$ cyclic code over $\mathbb{F}_q$ with defining
set $T \subset Z_n=\{0,1,\ldots, n-1\}$. Then the following holds:

$(i)$ The Euclidean dual $C^{\bot}$ is also cyclic and has defining
set $Z_n \setminus (-1)T$.

$(ii)$ The Hermitian dual $C^{\bot h}$ is also cyclic and has defining
set $Z_n \setminus (-q)T.$
\end{lemma}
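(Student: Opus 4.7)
The plan is to work throughout with the defining-set description of cyclic codes. Since $\gcd(n,q) = 1$, fix a primitive $n$-th root of unity $\alpha$ in a suitable extension of $\mathbb{F}_q$ and write the generator polynomial of $C$ as $g(x) = \prod_{i \in T}(x-\alpha^i)$, with check polynomial $h(x) = (x^n-1)/g(x) = \prod_{i \in Z_n \setminus T}(x-\alpha^i)$. The cyclicity of the two duals will fall out of producing an explicit generator polynomial in each case.

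For part (i), I would invoke the classical observation that $C^{\bot}$ is generated (up to a nonzero scalar) by the reciprocal polynomial $h^{*}(x) = x^{\deg h}\, h(x^{-1})$; this identification follows from the standard calculation that, on $R_n = \mathbb{F}_q[x]/(x^n-1)$, the Euclidean inner product of two codewords coincides with the extraction of an appropriate coefficient of a polynomial product after reversal. The roots of $h^{*}$ are exactly $\{\alpha^{-i} : i \in Z_n \setminus T\}$, so an index $j$ lies in the defining set of $C^{\bot}$ iff $-j \in Z_n \setminus T$, equivalently iff $j \in Z_n \setminus (-1)T$, which is the claim.

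For part (ii), the idea is to reduce to the Euclidean case via the componentwise Frobenius $\sigma : (y_0,\ldots,y_{n-1}) \mapsto (y_0^p,\ldots,y_{n-1}^p)$ (with $q = p^2$). The defining condition $\sum x_i y_i^p = 0$ is just $\sum x_i \sigma(y)_i = 0$, hence $C^{\bot h} = \sigma(C)^{\bot}$. It then suffices to determine the defining set of $\sigma(C)$: if $c \in C$ satisfies $c(\alpha^i) = 0$ for every $i \in T$, raising to the $p$-th power yields $\sigma(c)(\alpha^{ip}) = c(\alpha^i)^p = 0$, so the defining set of $\sigma(C)$ contains $pT$; equality follows from $|pT| = |T|$ (since $\gcd(p,n) = 1$) together with a dimension count. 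Applying part (i) to $\sigma(C)$ then yields the stated defining set of $C^{\bot h}$.

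The main step requiring care is the identification of the defining set of the Frobenius image $\sigma(C)$: one needs both the $p$-th power computation to produce the zeros and the dimension argument to upgrade the inclusion $pT \subseteq (\text{defining set of } \sigma(C))$ to an equality, and one must then track the scalar through the application of part (i) so that it matches the multiplier recorded in the statement. Once this is in place, the rest of the argument is routine bookkeeping with cyclotomic cosets modulo $n$ and reciprocal polynomials.
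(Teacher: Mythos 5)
The paper itself gives no proof of this lemma --- it is quoted verbatim from the cited sources --- so yours is the only argument on the table. Your part (i) is correct and is the standard route: $C^{\bot}$ is generated by the reciprocal $h^{*}(x)=x^{\deg h}h(x^{-1})$ of the check polynomial, whose zeros are $\{\alpha^{-i}: i\in Z_n\setminus T\}$, and since negation is a bijection of $Z_n$ this exponent set is exactly $Z_n\setminus(-1)T$. In part (ii) the reduction $C^{\bot h}=\sigma(C)^{\bot}$, the computation $\sigma(c)(\alpha^{ip})=c(\alpha^{i})^{p}$, and the dimension count showing that the defining set of $\sigma(C)$ is exactly $pT$ are all correct as well.

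The gap is in the very last sentence of part (ii), precisely the step you flagged as the one ``requiring care'' and then never carried out. What your argument proves is that $C^{\bot h}$ has defining set $Z_n\setminus(-p)T$. Under the conventions you adopted from the paper's introduction ($q=p^{2}$, conjugation $x\mapsto x^{p}$), this does \emph{not} match the stated set $Z_n\setminus(-q)T$: any defining set $T$ is a union of $q$-cyclotomic cosets, so $qT=T$ and hence $(-q)T=(-1)T$, meaning the statement read literally would give the Hermitian dual the same defining set as the Euclidean dual --- false whenever $pT\neq T$. A concrete counterexample: $n=3$, $q=4$, $T=\{1\}$ gives Hermitian dual defining set $\{0,2\}$ but Euclidean dual defining set $\{0,1\}$. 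Your formula $(-p)T$ is the correct one; the printed formula $(-q)T$ is correct only under the convention of the cited sources and of Section 2.2 of this paper, where the code lives over $\mathbb{F}_{q^{2}}$ and conjugation is $x\mapsto x^{q}$, i.e.\ where the lemma's ``$q$'' plays the role of your ``$p$''. So your derivation is sound and in fact exposes a notational inconsistency between the lemma and the paper's own definition of the Hermitian inner product, but asserting that your conclusion ``yields the stated defining set'' is not routine bookkeeping: you must either relabel (take the code over $\mathbb{F}_{q^{2}}$ with conjugation exponent $q$ throughout part (ii)), or explicitly record that, with the introduction's conventions, the correct multiplier is $-p=-\sqrt{q}$ and not $-q$.
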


When we consider an odd integer $n$ which divides $q-1$, hence $q$ is
a residue quadratic modulo $n$,( denoted by $q=\square \mod n$). Then
from~\cite[Theorem 1]{smid87}, there exists a duadic code of length
$n$. Now we will construct some of these duadic codes. Consider the following cyclic code $D_1$ with defining set
$T_1=\{1,2,\ldots, \frac{(n-1)}{2}\}$. By Lemma~\ref{number1}, the code $D_1$ is
an $[n,\frac{(n+1)}{2},\frac{(n+1)}{2}]$, $MDS$ code over
$\mathbb{F}_q$, by Lemma~\ref{lem:Th.4.4.9} its dual $C_1=D_1^{\bot}$
is also cyclic with defining set $T_1\cup\{0\}$. The code $C_1$ is
self-orthogonal as  $T_1 \subset T_1\cup\{0\}$ and is with dimension
$\frac{n-1}{2}$ and with minimum distance $\frac{n-1}{2}$, hence also
$MDS$.

This gives that the code $C_1$ is an even-like duadic code whose
splitting is given by $\mu_{-1}$ due to the following Lemma.
\begin{lemma}\label{lem:Th.6.4.1}(~\cite[Theorem 6.4.1]{huffman03})
Let $C$ be any $[n,\frac{n-1}{2}]$ cyclic code over $\mathbb{F}_q$,
  with $q$ a prime power. Then $C$ is self-orthogonal if and only if
  $C$ is an even like duadic code whose splitting is given by $\mu_{-1}$.
\end{lemma}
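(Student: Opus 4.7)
The plan is to translate self-orthogonality into an inclusion of defining sets, upgrade it via a cardinality count to a disjointness statement, and then read off a splitting by $\mu_{-1}$. Write $T \subseteq \mathbb{Z}_n$ for the defining set of $C$ and put $S := \mathbb{Z}_n \setminus T$; since $\dim C = (n-1)/2$ we have $|T| = (n+1)/2$ and $|S| = (n-1)/2$.

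In the easy direction, suppose $C$ is an even-like duadic code whose splitting $(T_1, T_2)$ is given by $\mu_{-1}$, so that $T = T_1 \cup \{0\}$, the union $\mathbb{Z}_n = T_1 \cup T_2 \cup \{0\}$ is disjoint, and $-T_1 = T_2$. Then Lemma~\ref{lem:Th.4.4.9}(i) identifies the defining set of $C^{\bot}$ as $\mathbb{Z}_n \setminus (-T) = \mathbb{Z}_n \setminus (T_2 \cup \{0\}) = T_1 \subset T$, so $C \subseteq C^{\bot}$.

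Conversely, suppose $C$ is self-orthogonal. By the same lemma, $C^{\bot}$ has defining set $-S$, and $C \subseteq C^{\bot}$ forces $-S \subseteq T$. Comparing cardinalities, $|{-S}| = (n-1)/2$ and $|T| = (n+1)/2$, so this inclusion is equivalent to $S \cap (-S) = \emptyset$ together with $|T \setminus (-S)| = 1$. If $0 \in S$ then $0 \in S \cap (-S)$, a contradiction, so $0 \in T$; moreover $0 \notin -S$, forcing $T \setminus (-S) = \{0\}$ and hence $T = \{0\} \cup (-S)$. Setting $T_1 := -S$ and $T_2 := S$ then gives a disjoint partition $\mathbb{Z}_n = T_1 \cup T_2 \cup \{0\}$ with $\mu_{-1}(T_1) = T_2$, i.e.\ a splitting by $\mu_{-1}$ whose associated even-like duadic code has defining set $T_1 \cup \{0\} = T$.

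The one technical point I expect to dwell on is checking that $T_1$ and $T_2$ are genuine unions of $q$-cyclotomic cosets modulo $n$. Since $T$ is a defining set, it is such a union, and removing the singleton coset $\{0\}$ leaves $T_1$ a union of cosets. Because $\mu_{-1}$ commutes with $\mu_q$ in the multiplier group, $\mu_{-1}$ permutes $q$-cyclotomic cosets, and hence $T_2 = -T_1$ is also a union of cosets. Once this verification is in place, the remainder is cardinality bookkeeping.
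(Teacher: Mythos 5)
Your proof is correct, but note that there is no proof in the paper to compare it with: the lemma is quoted verbatim from \cite[Theorem 6.4.1]{huffman03} and used as a black box, so what you have produced is a sound, self-contained reconstruction of that textbook result from the paper's Lemma~\ref{lem:Th.4.4.9}(i) alone. The key steps all check out: self-orthogonality $C\subseteq C^{\bot}$ is equivalent to the reverse inclusion of defining sets $\mathbb{Z}_n\setminus(-1)T\subseteq T$; since $\mathbb{Z}_n\setminus(-1)T=-S$ with $S=\mathbb{Z}_n\setminus T$, the cardinality count $|{-S}|=(n-1)/2$, $|T|=(n+1)/2$ correctly yields $S\cap(-S)=\emptyset$, then $0\in T$, then $T=\{0\}\cup(-S)$, and hence the splitting $(T_1,T_2)=(-S,S)$ interchanged by $\mu_{-1}$; and your closing verification that $T_1$ and $T_2$ are unions of $q$-cyclotomic cosets (the point that makes $(T_1,T_2)$ a legitimate splitting) is also right, and could even be shortened by observing that $T_2=S$ is the complement of the union of cosets $T$. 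Two things you use tacitly are worth making explicit, though they are standing conventions of the paper rather than gaps: $n$ must be odd with $\gcd(n,q)=1$ (oddness is forced by the dimension $(n-1)/2$; coprimality is the paper's blanket assumption, without which defining sets are not even defined), and inclusion between cyclic codes corresponds to reverse inclusion between their defining sets.
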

This gives us a pair of duadic codes $D_1= C_1^{\bot}$ and
$D_2=C_2^{\bot}$ and a pair of even like duadic code
$C_2=\mu_{-1}(C_1)$.
Hence the following result.
\begin{lemma}
\label{lem:main1}
Let $n$ be an odd integer which divides $q-1$, then there exists a pair of $MDS$
codes $D_1$, $D_2$ with parameters
$[n,\frac{(n+1)}{2},\frac{(n+1)}{2}]$, which are duadic codes with the
splitting given by $\mu_{-1}$.
\end{lemma}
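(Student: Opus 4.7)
The plan is to exhibit $D_1$ concretely as a cyclic code with a consecutive defining set, and then use Lemma~\ref{lem:Th.4.4.9} together with Lemma~\ref{lem:Th.6.4.1} to upgrade it to one half of a duadic pair. More precisely, I would take $D_1$ to be the cyclic code over $\mathbb{F}_q$ with defining set $T_1=\{1,2,\ldots,(n-1)/2\}$; since $n\mid q-1$, Lemma~\ref{number1} applies directly and yields that $D_1$ has parameters $[n,(n+1)/2,(n+1)/2]$ and is $MDS$.

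Next I would compute the even-like partner. By part $(i)$ of Lemma~\ref{lem:Th.4.4.9}, the Euclidean dual $C_1=D_1^{\bot}$ is cyclic with defining set $Z_n\setminus(-1)T_1$. A direct computation modulo $n$ gives $-T_1=\{(n+1)/2,\ldots,n-1\}$, so the defining set of $C_1$ is exactly $T_1\cup\{0\}$, and $C_1$ has dimension $(n-1)/2$. The inclusion $T_1\subset T_1\cup\{0\}$ translates, in the defining-set language, to $C_1\subseteq D_1=C_1^{\bot}$; that is, $C_1$ is self-orthogonal.

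The final step then invokes Lemma~\ref{lem:Th.6.4.1}: any self-orthogonal $[n,(n-1)/2]$ cyclic code over $\mathbb{F}_q$ is an even-like duadic code whose splitting is given by $\mu_{-1}$. Setting $C_2=\mu_{-1}(C_1)$ and $D_2=C_2^{\bot}$, the pair $(D_1,D_2)$ is then the sought duadic pair with $\mu_{-1}$-splitting; since the multiplier $\mu_{-1}$ is a coordinate permutation (hence an isometry) and commutes with duality, $D_2=\mu_{-1}(D_1)$ inherits the parameters $[n,(n+1)/2,(n+1)/2]$ of $D_1$ and is in particular $MDS$.

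The only delicate point is the defining-set bookkeeping in the middle step, namely the identification $Z_n\setminus(-1)T_1=T_1\cup\{0\}$, which crucially uses that $n$ is odd so that $(n-1)/2$ and $(n+1)/2$ are well-defined integers and the two arcs $T_1$ and $-T_1$ together with $\{0\}$ partition $Z_n$. Once this parity check is cleared, the three preceding lemmas plug in essentially mechanically to deliver the conclusion.
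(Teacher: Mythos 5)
Your proposal is correct and follows essentially the same route as the paper: the cyclic code with defining set $T_1=\{1,\ldots,\frac{n-1}{2}\}$, Lemma~\ref{number1} for the $MDS$ property, Lemma~\ref{lem:Th.4.4.9} to identify $C_1=D_1^{\bot}$ with defining set $T_1\cup\{0\}$ and deduce self-orthogonality, and Lemma~\ref{lem:Th.6.4.1} to obtain the $\mu_{-1}$-splitting with $C_2=\mu_{-1}(C_1)$, $D_2=C_2^{\bot}$. Your closing remark that $D_2=\mu_{-1}(D_1)$ inherits the $MDS$ parameters is a small but welcome addition, since the paper leaves that point implicit.
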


Since $n$ is odd, we want to extend the codes $D_i$ for $1\leq i \leq
2$ in such a way the extended code is self-dual. This is possible
provided the hypothesis of the following Lemma are satisfied. 
\begin{lemma}\label{lem:Th.6.4.12}(~\cite[Theorem 6.4.12]{huffman03})
Let $D_1$ and $D_2$ be a pair of odd-like duadic codes of length $n$
over $\mathbb{F}_q$. Assume that 
\begin{equation} \label{eq:sol}
1+ \gamma ^2 n=0
\end{equation}
 has a solution in
$\mathbb{F}_q$. Then if $\mu_{-1}$ gives the splitting from $D_1$ and
$D_2$, then $\widetilde{D_1}$ and $\widetilde{D_2}$ are self-dual. Here $\widetilde{D_i}=\{\widetilde{\mathsf{c}}\, | \, \mathsf{c}
\in D_i\}$ for $1 \leq i \leq 2$ and $\widetilde{\mathsf{c}}=c_0\ldots
c_n c_{\infty}$ with $c_{\infty}=-\gamma \sum_{i=0}^{n-1}c_i$.
\end{lemma}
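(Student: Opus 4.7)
My plan is to reduce the claim to a self-orthogonality computation and conclude by a dimension count. Since extending a code by one coordinate preserves its dimension, $\widetilde{D_1}$ has length $n+1$ and dimension $(n+1)/2$, exactly half the length; so proving $\widetilde{D_1}\subseteq \widetilde{D_1}^{\bot}$ is sufficient for self-duality, and the same argument applied to $D_2$ handles the second code.

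The key structural input I would exploit is the decomposition $D_1 = C_1 \oplus \langle j\rangle$, where $j=(1,1,\ldots,1)$ and $C_1$ is the even-like duadic code with defining set $T_1 \cup \{0\}$. This holds because $C_1\subseteq D_1$ (the defining set of $C_1$ contains that of $D_1$), $\dim D_1 - \dim C_1 = 1$, $j\in D_1$ (odd-likeness), and $j\notin C_1$ since $\sum j_i = n \ne 0$ in $\mathbb{F}_q$. Crucially, the hypothesis that the splitting is $\mu_{-1}$, combined with Lemma~\ref{lem:Th.6.4.1}, guarantees that $C_1$ is self-orthogonal.

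Writing arbitrary codewords $c = u+\lambda j$ and $c' = u'+\lambda' j$ with $u,u'\in C_1$, I would expand the inner product bilinearly. Using $\langle u,u'\rangle = 0$ (self-orthogonality of $C_1$), $\langle u,j\rangle = \sum u_i = 0$ (even-likeness of $C_1$), and $\langle j,j\rangle = n$, the unextended piece simplifies to $\langle c,c'\rangle = \lambda\lambda' n$. For the extra coordinate, $\sum c_i = \lambda n$ and similarly for $c'$, so $c_\infty c'_\infty = \gamma^2 \lambda\lambda' n^2$. Summing,
\[
\langle \widetilde{c},\widetilde{c'}\rangle \;=\; \langle c,c'\rangle + c_\infty c'_\infty \;=\; \lambda\lambda' n\,(1+\gamma^2 n),
\]
which vanishes by the hypothesis~\eqref{eq:sol}. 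Thus $\widetilde{D_1}$ is self-orthogonal and, by dimension, self-dual; the argument for $\widetilde{D_2}$ is mirror-image, using that $C_2=\mu_{-1}(C_1)$ is likewise self-orthogonal.

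The main obstacle is conceptual rather than computational: one has to recognize that the precise function of the condition $1+\gamma^2 n = 0$ is to cancel the contribution of the all-ones component of $D_1$ against the new extension coordinate, and that the $\mu_{-1}$-splitting hypothesis is exactly what makes this cancellation possible by forcing the self-orthogonality of $C_1$. Once the decomposition $D_1 = C_1 \oplus \langle j\rangle$ is in hand, everything else is a short bilinear calculation.
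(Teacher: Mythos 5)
Your argument is correct and complete. One point of comparison: the paper does not actually prove this lemma --- it is stated purely as a citation of \cite[Theorem 6.4.12]{huffman03} --- so your proposal supplies a proof where the paper merely quotes one. What you wrote is essentially the standard textbook argument: decompose the odd-like code as $D_1=C_1\oplus\langle j\rangle$ with $C_1$ the even-like duadic subcode, observe that the $\mu_{-1}$-splitting hypothesis forces $C_1$ to be self-orthogonal (this is exactly Lemma~\ref{lem:Th.6.4.1} above, so your proof meshes well with the paper's own toolkit), reduce every inner product of extended codewords to $\lambda\lambda' n(1+\gamma^2 n)=0$, and finish with the dimension count $\dim\widetilde{D_i}=(n+1)/2$. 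The only steps worth spelling out fully are (i) $j\in D_1$: the defining set of $\langle j\rangle$ is $Z_n\setminus\{0\}$, which contains $T_1$ since $D_1$ is odd-like, hence $\langle j\rangle\subseteq D_1$; and (ii) $n\neq 0$ in $\mathbb{F}_q$, which is what makes the sum $C_1\oplus\langle j\rangle$ direct and $\lambda\lambda'$ the genuine coefficient --- this holds because $(n,q)=1$, and is in any case forced by the solvability of~(\ref{eq:sol}). Both are immediate, so your proof stands as a faithful, self-contained reconstruction of the cited result.
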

 In general it is not always possible to
 find a solution to the equation~(\ref{eq:sol})
 in $\mathbb{F}_q$. Furthermore, extending an $MDS$ code does not give
 always an $MDS$ code. But under some conditions this can be possible, as
 proved by Hill~\cite{hill}. 
For $n=q-1$ we have $\gamma=1$ is a solution
 of~(\ref{eq:sol}). Moreover, if the code is a Reed-Solomon code, then
by a result of Macwilliams and
 Sloane~\cite[Theorem.10.3.1]{macwilliams} the extended code is also
 $MDS$. In the landmark textbook~\cite{huffman03} the solution of the 
 equation~(\ref{eq:sol}) is discussed, when $n$ is an odd prime
 number.
The following Lemma generalize their results to $n=p^m$. 
\begin{lemma}
\label{lem:order}
Let $q=r^t$, with $r$ an odd prime, $t$ an odd integer and $n=p^m$
such that $n$ divides $q-1$. Then there is a solution to the 
equation~(\ref{eq:sol}) in $\mathbb{F}_{q}$, whenever one of the following holds.
\begin{enumerate}
\item $r\equiv 3 \mod 4$, $p\equiv 3 \mod 4$ and $m$ odd;
\item $r\equiv 1\mod 4$ and $p \equiv 1 \text{ or } 3 \mod 4$; 
\end{enumerate}  
\end{lemma}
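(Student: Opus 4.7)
The plan is to reformulate the equation as a quadratic residue condition and then chase Legendre symbols. The equation $1+\gamma^{2}n=0$ has a solution in $\mathbb{F}_{q}$ precisely when $-n$ is a square in $\mathbb{F}_{q}=\mathbb{F}_{r^{t}}$, so the whole proof reduces to showing that $-p^{m}$ is a square in $\mathbb{F}_{r^{t}}$ under the two sets of hypotheses.

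First I would reduce the problem from $\mathbb{F}_{r^{t}}$ down to the prime field $\mathbb{F}_{r}$. Since $-p^{m}$ belongs to the prime field, I use the fact that for $r$ odd and $t$ odd the inclusion $\mathbb{F}_{r}^{*}\hookrightarrow\mathbb{F}_{r^{t}}^{*}$ preserves and reflects squareness: writing $\mathbb{F}_{r^{t}}^{*}=\langle g\rangle$, the subfield $\mathbb{F}_{r}^{*}$ is generated by $g^{(r^{t}-1)/(r-1)}$, and $(r^{t}-1)/(r-1)=1+r+\cdots+r^{t-1}\equiv t\pmod{2}$ is odd. Hence $g^{k(r^{t}-1)/(r-1)}$ is a square in $\mathbb{F}_{r^{t}}$ if and only if $k$ is even, which is the same condition as being a square in $\mathbb{F}_{r}$.

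Next I would compute $(-p^{m}/r)=(-1/r)(p/r)^{m}$ and apply quadratic reciprocity in the form
\[
(p/r)(r/p)=(-1)^{\frac{p-1}{2}\cdot\frac{r-1}{2}}.
\]
The key auxiliary observation is that $(r/p)=1$: indeed, $p^{m}\mid r^{t}-1$ forces $r^{t}\equiv 1\pmod{p}$, so the order of $r$ in $(\mathbb{Z}/p\mathbb{Z})^{*}$ divides the odd integer $t$ and is therefore odd; since every element of odd order in the cyclic group $(\mathbb{Z}/p\mathbb{Z})^{*}$ lies in the (unique) index-$2$ subgroup of squares, $r$ is a quadratic residue modulo $p$.

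Finally I would dispose of the two cases. In case $(1)$, $r\equiv 3\pmod{4}$ gives $(-1/r)=-1$, and both $(p-1)/2$ and $(r-1)/2$ are odd so reciprocity gives $(p/r)=-(r/p)=-1$; combined with $m$ odd this yields $(-p^{m}/r)=(-1)(-1)^{m}=1$. In case $(2)$, $r\equiv 1\pmod{4}$ gives $(-1/r)=1$ and $(r-1)/2$ even makes the reciprocity sign trivial, so $(p/r)=(r/p)=1$ and $(-p^{m}/r)=1$ for every $m$, regardless of the residue class of $p$ modulo $4$. The main obstacle is really just the bookkeeping in the Legendre-symbol case split; the conceptual content is the odd-order argument producing $(r/p)=1$ from the divisibility $p^{m}\mid r^{t}-1$ with $t$ odd.
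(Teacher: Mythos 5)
Your proof is correct, but it takes a genuinely different route from the paper's. The paper treats the prime case as a black box: it invokes Lemma 6.6.17 of Huffman and Pless to obtain a solution $\gamma$ of $1+\gamma^{2}p=0$ in $\mathbb{F}_q$ under each set of hypotheses, and then lifts from $p$ to $n=p^{m}$ by a power trick: since $\gamma^{2m}=(-1)^{m}/p^{m}$, the element $\gamma^{m}$ solves $1+\gamma^{2}p^{m}=0$ when $m$ is odd, and when $m$ is even (which the lemma allows only in case 2, where $r\equiv 1 \bmod 4$) the defect sign is repaired by multiplying by an element $a$ with $a^{2}=-1$, which exists precisely because $r\equiv 1 \bmod 4$. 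You instead collapse the whole statement into one quadratic-residue computation: solvability is equivalent to $-p^{m}$ being a square in $\mathbb{F}_{r^{t}}$; squareness of elements of the prime field transfers between $\mathbb{F}_{r^{t}}$ and $\mathbb{F}_{r}$ because $(r^{t}-1)/(r-1)$ is odd when $t$ is odd; the key symbol $(\frac{r}{p})$ equals $1$ because $p^{m}\mid r^{t}-1$ forces $\ord_p(r)$ to divide the odd integer $t$, and an element of odd order in a cyclic group is a square; then quadratic reciprocity finishes each case. Both arguments are sound. Yours is self-contained, replaces the textbook citation by elementary number theory, makes visible exactly where each hypothesis is used ($t$ odd in the subfield reduction and in forcing $(\frac{r}{p})=1$; the mod-4 conditions and the parity of $m$ only in the final sign count), and as a bonus shows that the condition ``$p\equiv 1$ or $3 \bmod 4$'' in case 2 is vacuous: any odd prime $p$ works once $r\equiv 1 \bmod 4$. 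The paper's argument, granting the citation, is shorter and has the minor virtue of producing an explicit solution of $1+\gamma^{2}n=0$ from the prime-case solution rather than a pure existence statement; your odd-order observation is in fact the same mechanism the paper hides inside its appeal to $q$ being a square modulo $p$.
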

\begin{proof}
As mentioned before if $n$ divides $q-1$, then $q=\square \mod
p$. This gives that $q=\square \mod r$.  
Hence if $ p \equiv 3 \mod 4$ and  $r\equiv 3 \mod 4$, there is a solution $\gamma$ to
the equation~ $1+\gamma^2p=0$~\cite[Lemma 6.6.17]{huffman03}. If $m$ is
odd, hence $\gamma^m$ is a solution to the equation~(\ref{eq:sol}).
Now, assume $q \equiv 1 \mod 4$ and $p\equiv 1 \text{ or }3 \mod 4$. The equation~
$1+\gamma^2p=0$~\cite[Lemma 6.6.17]{huffman03} has a solution in
$\mathbb{F}_q$. As for the previous case, if $m$ is odd $\gamma^m$ is
a solution to the equation~(\ref{eq:sol}). Now, assume that $m$ is even, since for
such $p$ and $q$ there is a solution to
$1+\gamma^2p=0$ in $\mathbb{F}_q $~\cite[Lemma 6.6.17]{huffman03}. This gives $(\gamma^m)^2=\frac{1}{p^m}$. But, since $r \equiv 1 \mod
4$, then $-1$ is a quadratic residue in $\mathbb{F}_r \subset \mathbb{F}_q $~\cite[Lemma 6.2.4]{huffman03}. Then, there exists
an $a \in \mathbb{F}_q $, such that $a^2=-1$. Hence $a\gamma^m$ is a
solution of the equation~(\ref{eq:sol}) in $\mathbb{F}_q$. 
\end{proof}

In the following result we give Euclidean self-dual codes which are
$MDS$.
\begin{theorem}
\label{th:main2}
Let $q=r^t$ be a prime power (even or odd), $n$ an odd divisor of
$q-1$. Then there exists a pair of $D_1,D_2$ of $MDS$ odd-like duadic
codes, with splitting $\mu_{-1}$ and where the even-like duadic codes
are $MDS$ self-orthogonal and $T_1=\{1,\ldots,\frac{n-1}{2}\}$.
Furthermore, the following holds:

$(i)$ If $q=2^t$, with $t$ odd and $n=p$ an odd prime, then the extended codes $\widetilde{D_i}$ are
$[n+1,\frac{n+1}{2},\frac{n+3}{2}]$ $MDS$ and Euclidean self-dual codes.

$(ii)$ If $q=r^t$, with $t$ even and $n$ odd and divides $r-1$, then the
extended codes $\widetilde{D_i}$ for $1\leq i\leq 2$ are $[n+1,\frac{n+1}{2},\frac{n+3}{2}]$ $MDS$ Euclidean self-dual codes.

$(iii)$ If $q=r^t$, with $r\equiv 3 \mod 4$, $t$ odd and $n=p^m$, with
$p$ a prime such that $p \equiv 3\mod 4$ and $m$ is odd, then the extended codes
$\widetilde{D_i}$ are $[n+1,\frac{n+1}{2},\frac{n+3}{2}]$ $MDS$ and Euclidean
self-dual codes.

$(iv)$ If $q=r^t$, with $t$ odd, $p$ a prime such that $r \equiv p \equiv 1 \mod 4$ and $n=p^m$, then the extended codes
$\widetilde{D_i}$ are $[n+1,\frac{n+1}{2},\frac{n+3}{2}]$ $MDS$ and Euclidean self-dual codes.
\end{theorem}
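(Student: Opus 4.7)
I would proceed in three stages: first, set up the duadic structure; second, solve the extension equation~(\ref{eq:sol}) in each of the four cases; and third, verify that the extended codes retain the $MDS$ property.

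For stage one, I would take the defining set $T_1=\{1,\ldots,(n-1)/2\}$. By Lemma~\ref{number1} the cyclic code $D_1$ with this defining set is an $[n,(n+1)/2,(n+1)/2]$ $MDS$ code. By Lemma~\ref{lem:Th.4.4.9}$(i)$ its Euclidean dual $C_1=D_1^{\bot}$ has defining set $T_1\cup\{0\}$, so $C_1\subseteq D_1$, i.e.\ $C_1$ is self-orthogonal, with parameters $[n,(n-1)/2,(n+3)/2]$, which is again $MDS$. Lemma~\ref{lem:Th.6.4.1} then identifies $C_1$ as an even-like duadic code whose splitting is $\mu_{-1}$, and setting $C_2=\mu_{-1}(C_1)$, $D_i=C_i^{\bot}$ produces the announced pair.

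For stage two, I would verify that $1+\gamma^2 n=0$ is solvable in $\mathbb{F}_q$ in each of the four cases, so that Lemma~\ref{lem:Th.6.4.12} supplies the self-duality of the extensions. In case $(i)$, characteristic~$2$ together with $n$ odd gives $1+\gamma^2 n=(1+\gamma)^2$ in $\mathbb{F}_q$, which vanishes at $\gamma=1$. In case $(ii)$, $t$ even forces $\mathbb{F}_{r^2}\subseteq\mathbb{F}_q$; since $\mathbb{F}_r^*$ has index $r+1$ (even, as $r$ is odd) in the cyclic group $\mathbb{F}_{r^2}^*$, one has $\mathbb{F}_r^*\subseteq(\mathbb{F}_{r^2}^*)^2$, and since $n$ divides $r-1$ the element $-1/n$ lies in $\mathbb{F}_r^*$, hence is a square in $\mathbb{F}_{r^2}\subseteq\mathbb{F}_q$. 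Cases $(iii)$ and $(iv)$ are immediate applications of items~1 and~2 of Lemma~\ref{lem:order}.

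For stage three, I would show each extended code $\widetilde{D_i}$ is $MDS$. Being a self-dual $[n+1,(n+1)/2]$ code, Singleton forces $d\leq(n+3)/2$. Any codeword of weight at most $(n+1)/2$ in $\widetilde{D_i}$ would come from a weight-$(n+1)/2$ codeword $\mathsf{c}\in D_i$ with $c_\infty=0$, i.e.\ $\sum_i c_i=0$; such $\mathsf{c}$ lies in the even-like subcode of $D_i$, which by construction is exactly $C_i$, and $C_i$ is $MDS$ with minimum distance $(n+3)/2>(n+1)/2$, a contradiction. Hence the extension has minimum distance exactly $(n+3)/2$. The main obstacle is stage two: case~$(ii)$ rests on the observation that $\mathbb{F}_r\subseteq(\mathbb{F}_{r^2})^2$ when $r$ is odd, while $(iii)$ and $(iv)$ package the delicate prime-power quadratic residue analysis done in Lemma~\ref{lem:order}; stages one and three are structural, keyed on the fact that potential minimum-weight extended codewords are trapped in the $MDS$ self-orthogonal subcode $C_i$ of strictly larger minimum distance.
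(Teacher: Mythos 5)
Your proof is correct and follows essentially the same route as the paper: construct the duadic pair with $T_1=\{1,\ldots,\frac{n-1}{2}\}$ (the paper's Lemma~\ref{lem:main1}), solve $1+\gamma^2 n=0$ in each of the four cases so that Lemma~\ref{lem:Th.6.4.12} yields self-duality of the extensions, and then show the extension does not lose the $MDS$ property. The only differences are local and cosmetic: in cases $(i)$ and $(ii)$ you prove directly what the paper cites from Huffman--Pless (the identity $1+\gamma^2 n=(1+\gamma)^2$ in characteristic $2$, and $\mathbb{F}_r^*\subseteq(\mathbb{F}_{r^2}^*)^2$ for odd $r$), and in the $MDS$ step you trap a hypothetical low-weight extended codeword in the even-like subcode $C_i$ of minimum distance $\frac{n+3}{2}$, which is exactly the content of the paper's argument that a codeword with $\mathsf{c}(1)=0$ is a multiple of $(x-1)g(x)$ and hence has weight at least $\frac{n+3}{2}$ by the $BCH$ bound.
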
  
\begin{proof}
Lemma~\ref{lem:main1} gives a pair $D_1,D_2$ of $MDS$ odd-like duadic
codes, with splitting $\mu_{-1}$ and where the even-like duadic codes
are $MDS$ self-orthogonal and $T_1=\{1,\ldots,\frac{n-1}{2}\}$.
If $q=2^t$, $t$ odd and $n=p$ an odd prime which divides $q-1$, hence
$q=\square \mod n$. From~\cite[Lemma 6.6.17]{huffman03}, there is a
solution to the equation~(\ref{eq:sol}) in~$\mathbb{F}_q$. Hence from Lemma~\ref{lem:Th.6.4.12}, the extended codes $\widetilde{D_i}$ are
self-dual. If $t$ is even and $n$ is an odd integer which divides
$r-1$, from~\cite{huffman03} p. 227 there is a solution
of the equation~(\ref{eq:sol}) in $\mathbb{F}_{r^2} \subset
\mathbb{F}_q$,  since the coefficients are in
$\mathbb{F}_{r}$.   
Further, if we assume $r\equiv3 \mod 4$, $t$ odd and $n=p^m$  with $m$
odd and such that $p \equiv 3\mod 4$,  by Lemma~\ref{lem:order}, there
is a solution to the equation~(\ref{eq:sol}). Hence from Lemma~\ref{lem:Th.6.4.12} the
extended codes $\widetilde{D_i}$ are self-dual. Similarly if we
assume $r\equiv 1\mod 4$, $t$ odd and $n=p^m$ such that $p \equiv 1
\text{ or }3\mod 4$, we have a solution to the equation~(\ref{eq:sol}) by Lemma~\ref{lem:order}. Hence from Lemma~\ref{lem:Th.6.4.12} the
extended codes $\widetilde{D_i}$ are self-dual.  
Now we prove that $\widetilde{D_i}$ are $MDS$. Let $\mathsf{c}$ be a
codeword of $D_i$ of weight $\frac{n+1}{2}$, the minimum weight of
$\widetilde{D_i}$ is increasing by 1 provided 
\begin{equation}
\label{eq:mds}
-\gamma \mathsf{c}(1)=-\gamma \sum_{i=0}^{n-1}c_i= c_{\infty} \neq 0.
\end{equation}
But $\gamma \neq 0$, hence to have~(\ref{eq:mds}) it suffices to verify that $\mathsf{c}(1) \neq 0$.
Since $c(x)=a(x)g(x)$ for some $a(x)\mod (x^n-1)$ and $g(x)=
\prod_{i=1}^{\frac{n-1}{2}}(x-\alpha^i)$. $g(1) \neq 0$, also $a(1) \neq 0$
otherwise, $a(x)$ is a multiple of $(x-1)g(x)$. Hence by the $BCH$
bound the weight is $\geq 1+\frac{n+1}{2}$, by the singleton bound we
get the equality.
\end{proof}

\begin{table}
\begin{center}
\begin{tabular}{|c|c|c|c|}
\hline $n$&$q$&$n$&$q$ \\
\hline
4&7,13,19,31,43,49,79,97,$11^2$,$13^2$,$17^2$,$31^2$&6&17,$9^2$,$11^2$,$31^2$ \\
8&8,29,43,71,$13^2$,$2^9$&10&19,37,73,109,$19^2$\\
12&23,67,89&14&$53^2$\\
16&$31^2$&18&$103^2$\\24&$2^{11}$&30&$59^2$\\
32&32,$5^3$&74&293,$2^9$\\
84&167&90&$2^{11}$\\
\hline
\end{tabular}
\end{center}
\caption{Euclidean Self-dual $MDS$ Codes over $\mathbb{F}_q$ obtained by Theorem~\ref{th:main2}}
\end{table}

\subsection{Hermitian Self-Dual $MDS$ Codes}
Let $q$ be a power of an odd prime $r$. In this part we will construct $MDS$ self-dual codes over
$\mathbb{F}_{q^2}$ of length $n+1$, with $n|q^2+1$.

First remark that, when $n$ divides $q^2+1$, then the multiplicative
order of $q^2$ modulo $n$ is equal to 2. This implies that all the
cyclotomic classes $C(i)$ modulo $n$ are reversible with 
cardinality  1 or 2, that is because
$|C(i)|$ divides $\ord _nq^2$. It follows that, if $n$ is odd then
$C(i)=\{i,-i\}$ for  any $i\neq 0$.    
 If we consider the cyclic code generated by the polynomial 
 $$g_s(x)=\prod_{i= \frac{n-1}{2}-s}^{i= \frac{n-1}{2}+s+1}(x-\alpha^i) \text{ with } 0 \leq s \leq
\frac{n-1}{2},$$ 
it is an $[n, n-2s-2, 2s+3]$ $MDS$ cyclic code. The polynomial $g_s(x)$ has $2s+2$ consecutive roots
$$\alpha^{\frac{n-1}{2}-s},\alpha^{\frac{n-1}{2}-s+1}, \ldots,
\alpha^{\frac{n-1}{2}+1},\ldots , \alpha^{\frac{n-1}{2}+s+1}.$$
This gives a cyclic $MDS$ code with odd dimension $k=n-2s-2$.

Now, consider $n=p^m$ such that $n$ divides $q^2+1$ and $p^m \equiv 1
\mod 4$, for $s=\frac{n-1}{4}-1$, the polynomial $g_s$ generate a
cyclic $MDS$ code $D_1$ of parameter
$[n,\frac{n+1}{2},\frac{n+1}{2}]$.
 Lemma~\ref{lem:Th.4.4.9} gives that Hermitian dual of $D_1$ is with defining set $Z_n\setminus
(-qT)$. Since $\ord_nq^2$ is even (equal to 2), then the multiplier $\mu_{-q}$
gives a splitting~\cite[Proposition 13]{guenda09a}. Hence the code
$D_1$ is one of the odd-like duadic codes and then $D_1^{\bot h}=C_1$ is the even like duadic
with defining set $T\cup \{0\}$. Hence $C_i \subset C_i^{\bot
  h}=D_i$. As for the Euclidian case, using the usual extension of an
orthogonal code does not
give always a self-dual code. If we consider in 
$\mathbb{F}_{q^2}$ the equation 
\begin{equation}
\label{eq:dual}
1+ \gamma ^{q+1}n=0,
\end{equation} it has always a solution in $\mathbb{F}_{q^2}$ if we
assume $n \in \mathbb{F}_{r}$ as mentioned
in~\cite{lilibeth}.
For $1 \leq i \leq 2$, the extended codes are $\widetilde{D_i}=\{\widetilde{\mathsf{c}}\, | \, \mathsf{c}
\in D_i\}$, with $\widetilde{\mathsf{c}}=c_0\ldots
c_n c_{\infty}$, $c_{\infty}=-\gamma
\sum_{i=0}^{n-1}c_i$ and $\gamma$ is solution of the equation~(\ref{eq:dual}).

Since in this case the splitting is given by $\mu_{-q}$, the codes
$\widetilde{D_i}$ are Hermitian self-dual~\cite[Proposition
  4.8]{lilibeth}. By a similar argument as in Theorem~\ref{th:main2},
the extended codes are also $MDS$, since the codes $D_i$ are $MDS$.
This prove the following Theorem.
\begin{theorem}
\label{th:main3}
Let $q=r^t$ be a prime power, $n=p^m \in \mathbb{F}_r$ a divisor of $q^2+1$, where $p^m
\equiv 1 \mod 4$. There exists Hermitian self-dual codes over $\mathbb{F}_{q^2}$ which are
$MDS$ and extended duadic codes with the splitting given
$\mu_{-q}$ and with parameters $[n+1,\frac{n+1}{2},\frac{n+3}{2}]$. 
\end{theorem}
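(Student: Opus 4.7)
The plan is to mirror the structure of Theorem~\ref{th:main2}, replacing the Euclidean duality machinery by its Hermitian analogue. First I would exhibit the explicit MDS cyclic code $D_1$ of length $n$ over $\mathbb{F}_{q^2}$ that will serve as the odd-like duadic code. Since $n\mid q^2+1$, the multiplicative order of $q^2$ modulo $n$ equals $2$, so every nonzero $q^2$-cyclotomic class modulo $n$ has the form $\{i,-i\}$. Taking $s=\frac{n-1}{4}-1$ (an integer precisely because $p^m\equiv 1\bmod 4$), the polynomial $g_s(x)=\prod_{i=(n-1)/2-s}^{(n-1)/2+s+1}(x-\alpha^i)$ has $2s+2=\frac{n-1}{2}$ consecutive roots; the BCH bound combined with the Singleton bound yields parameters $[n,\frac{n+1}{2},\frac{n+1}{2}]$, with defining set $T_1=\{(n-1)/2-s,\ldots,(n-1)/2+s+1\}$.

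Second, I would identify $D_1$ as one of an odd-like duadic pair whose splitting is given by $\mu_{-q}$. By Lemma~\ref{lem:Th.4.4.9}$(ii)$ the Hermitian dual $C_1=D_1^{\bot h}$ is cyclic with defining set $Z_n\setminus(-q)T_1$, and since $\ord_n q^2=2$ is even the existence result cited from~\cite{guenda09a} guarantees that $\mu_{-q}$ furnishes a splitting of $Z_n\setminus\{0\}$. Hence $D_1$ and $D_2=\mu_{-q}(D_1)$ form the odd-like duadic pair and $C_i=D_i^{\bot h}$, the corresponding even-like ones, satisfy $C_i\subset D_i$ (because $T_i\subset T_i\cup\{0\}$) and are themselves MDS self-orthogonal of dimension $\frac{n-1}{2}$.

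Third, I would extend the codes in the Hermitian way. Because $n\in\mathbb{F}_r$, the norm map $\gamma\mapsto\gamma^{q+1}\colon\mathbb{F}_{q^2}^{*}\to\mathbb{F}_q^{*}$ is surjective, so equation~(\ref{eq:dual}) has a solution in $\mathbb{F}_{q^2}$; setting $c_\infty=-\gamma\sum_i c_i$, Proposition~4.8 of~\cite{lilibeth} ensures that each $\widetilde{D_i}$ is Hermitian self-dual, precisely because the splitting is $\mu_{-q}$. For the MDS property I would argue as in Theorem~\ref{th:main2}: if $c=a\cdot g_s$ is a minimum-weight codeword of $D_1$, then $c_\infty\neq 0$ iff $c(1)\neq 0$; since $0\notin T_1$ we have $g_s(1)\neq 0$, and $a(1)=0$ would force $(x-1)g_s(x)\mid c(x)$, giving $2s+3$ consecutive roots and so weight at least $\frac{n+3}{2}$ by BCH, contradicting minimality. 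Hence the extended weight is exactly $\frac{n+3}{2}$, matching the Singleton bound.

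The step I expect to be most delicate is the second one: verifying that the specific interval $T_1$ I wrote down actually produces a $\mu_{-q}$-splitting, rather than merely appealing to the abstract existence theorem. Concretely this amounts to checking that $(-q)\cdot T_1\equiv Z_n\setminus(T_1\cup\{0\})\bmod n$, which should follow from the symmetry $C(i)=\{i,-i\}$ combined with the palindromic interval shape of $T_1$ centered at $\frac{n-1}{2}$, but needs to be verified carefully because the duadic identification is the hinge on which both the Hermitian self-duality and the use of Proposition~4.8 of~\cite{lilibeth} rest.
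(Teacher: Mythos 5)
Your proposal reproduces the paper's own proof step for step (same $g_s$, same appeal to Lemma~\ref{lem:Th.4.4.9} and to Proposition 13 of~\cite{guenda09a}, same extension via equation~(\ref{eq:dual}) and Proposition 4.8 of~\cite{lilibeth}), so in that sense it is faithful; but the step you yourself single out as delicate is a genuine gap, and it cannot be closed the way you suggest. The symmetry $C(i)=\{i,-i\}$ together with $-T_1=T_1$ only gives $(-q)T_1=qT_1$; it does not give $(-q)T_1=Z_n\setminus(T_1\cup\{0\})$, and that identity is false in general. Take $n=13$ and any admissible $q$ (i.e. $13\mid q^2+1$, so $q\equiv 5$ or $8 \bmod 13$; e.g. $q=31$). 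Then $T_1=\{4,5,\ldots,9\}$, and for $q\equiv 5$ we have $-q\equiv 8$, so $8\cdot 4\equiv 6\in T_1$, while for $q\equiv 8$ we have $-q\equiv 5$, so $5\cdot 4\equiv 7\in T_1$; in both cases $(-q)T_1\cap T_1\neq\emptyset$. Worse, $\mu_{-q}$ swaps the classes $\{4,9\}$ and $\{6,7\}$, both of which lie inside $T_1$, so this interval can never be one side of a $\mu_{-q}$-splitting for any admissible $q$. Consequently the code with defining set $T_1\cup\{0\}$ is not Hermitian self-orthogonal and $\widetilde{D_1}$ is not Hermitian self-dual: the abstract existence of a $\mu_{-q}$-splitting, which is all that~\cite{guenda09a} provides, says nothing about this particular MDS code being duadic. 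To be clear, the paper's own proof makes exactly the same leap (``hence the code $D_1$ is one of the odd-like duadic codes''), so you have reproduced its flaw rather than introduced a new one; but as a proof the argument is incomplete, and the $n=13$ computation shows the bridge between ``MDS interval code'' and ``$\mu_{-q}$-duadic code'' genuinely fails (which also puts the corresponding entries of Table 2 in doubt).

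A second, smaller error: in your MDS step you claim that $(x-1)g_s(x)$ has $2s+3$ consecutive roots. It does not: its root exponents are $\{0\}\cup\{\frac{n-1}{4}+1,\ldots,\frac{3(n-1)}{4}\}$, and $0$ is not adjacent to that interval, unlike in Theorem~\ref{th:main2}, where $T_1$ starts at $1$ and adjoining $0$ really does lengthen the run. So the BCH bound only yields weight $\geq\frac{n+1}{2}$, not $\frac{n+3}{2}$. The correct route, conditional on the duadic identification, is to observe that the code with defining set $T_1\cup\{0\}$ would then equal $D_1^{\bot h}$, the Hermitian dual of an MDS code, hence itself MDS with minimum distance $\frac{n+3}{2}$; a minimum-weight codeword $c$ of $D_1$ with $c(1)=0$ would lie in it, a contradiction. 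That repairs the weight argument, but only in the cases where the duadic identification actually holds (e.g. $n=5$), not in general.
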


\begin{table}
\begin{center}
\begin{tabular}{|c|c|c|c|}
\hline $n$&$q$&$n$&$q$ \\
\hline
6&3,7,13,17,23,37,43,47,53,63,67,73,83&
14&31,47,73,83
\\18&13&30&17\\
38&31&42&73\\54&23,83&
62&11\\138&
37&182&19
\\234&89&422&29\\
\hline
\end{tabular}
\end{center}
\caption{Hermitian Self-dual $MDS$ Codes over $\mathbb{F}_{q^2}$ obtained by Theorem~\ref{th:main3}}
\end{table}
\section{Negacyclic Duadic Codes}
It was proved in ~\cite{hai} that if $n$ is odd, then the negacyclic
codes are equivalent to cyclic codes, for that we consider only
negacyclic codes with even length.

Now in order to use it latter we review the factorization of the
polynomial $x^n+1$ over $\mathbb{F}_q[x]$. This can be found
in~\cite{nuh,krishna,jensen}. We also assume $(n,q)=1$, so that the
polynomial $x^n+1$ does not have multiple roots.

The roots of $x^n+1$ are $\delta,\delta \xi,\ldots, \delta \xi^{n-1}$,
 where $\xi$ is a primitive $n$th root of unity and
 $\delta^n=-1$. Hence $\xi= \delta^2$, $\delta$ is a primitive $2n$th
 root of unity. Hence $\delta$ lies in an extension field
 $\mathbb{F}_{q^s}$, with $s$ equal to the multiplicative order of $q$
 modulo $2n$. Let $\omega$ be a primitive element of
 $\mathbb{F}_{q^s}$, hence  we can take $\delta = \omega^t$ and $\xi = \omega
^{2t}$, with $t=\frac{q^s-1}{2n}$. Then the following holds.
$$x^n+1= \prod_{i=0}^{n-1}(x-\delta \xi ^i)= \prod_{i=0}^{n-1}
(x-\omega ^{t(1+2i)})= \prod_{i=0}^{n-1}
(x-\delta^{(1+2i)}).$$
To each irreducible factor of $x^n+1$ corresponds a cyclotomic class modulo $2n$.
$\delta ^{2i+1}$ and $\delta ^{2j+1}$ are said to be conjugate if they
   are roots of a same irreducible factor of $x^n+1$. 

If we denote by $O_{2n}$ the set of odd integers from $1$ to
 $2n-1$. The defining set of negacyclic code $C$ of length $n$ is
 the set $T=\{i \in O_{2n}:\, \delta^i \text{ is a root of }C\}$. It
 will be the union of $q$-cyclotomic classes modulo $2n$. The
 dimension of the negaclic code with defining set $T$ is $n-|T|$. 
Nuh et al. ~~\cite{nuh} gave a negacylic 
 $BCH$ bound given. That is if $T$ has $d-1$
 consecutive odd integers, then the minimum distance is at least $d$.     

\begin{lemma}\label{lem:th2.blackford}
(\cite[Theorem 2]{blackford}
If $C$ is a negacyclic code with defining set $T$, then $C^{\bot}$
(the Euclidian dual of $C$) is a negacyclic code with defining set 
$$T^{\bot}=\{i \in O_{2n}\, :\, -i (\mod 2n) \notin T\}$$
\end{lemma}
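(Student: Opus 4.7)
The plan is to establish the result in three stages: showing that the dual is negacyclic, identifying its generator polynomial as (a scaling of) the reciprocal of the check polynomial, and then reading off the defining set from its roots.

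First, I would verify that $C^{\bot}$ is itself a negacyclic code. Let $\sigma$ denote the negacyclic shift on $\mathbb{F}_q^n$ sending $(c_0,\dots,c_{n-1})$ to $(-c_{n-1},c_0,\dots,c_{n-2})$, so that $C$ being negacyclic means $\sigma(C)\subseteq C$. A direct coordinate calculation gives $\langle \sigma(x),y\rangle=\langle x,\sigma^{-1}(y)\rangle$ for all $x,y\in\mathbb{F}_q^n$. Hence for any $b\in C^{\bot}$ and $c\in C$ we have $\langle c,\sigma(b)\rangle=\langle \sigma^{-1}(c),b\rangle=0$, because $\sigma^{-1}(c)\in C$. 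So $\sigma(C^{\bot})\subseteq C^{\bot}$, proving that $C^{\bot}$ is negacyclic.

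Next, I would identify the generator polynomial and then read off the defining set. Write the generator of $C$ in $R_n=\mathbb{F}_q[x]/(x^n+1)$ as $g(x)=\prod_{i\in T}(x-\delta^{i})$, and set the check polynomial $h(x)=(x^n+1)/g(x)=\prod_{i\in O_{2n}\setminus T}(x-\delta^{i})$. Let $h^{*}(x)=h(0)^{-1}x^{\deg h}h(1/x)$ be its monic reciprocal. Using the standard polynomial/inner-product dictionary, adapted to keep track of the sign coming from $x^n\equiv -1$, one verifies $\langle h^{*}(x)\rangle\subseteq C^{\bot}$, and then the dimension count $\dim\langle h^{*}\rangle=\deg g=|T|=\dim C^{\bot}$ forces equality. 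The defining set of $C^{\bot}$ is therefore the set of $j\in O_{2n}$ with $\delta^{j}$ a root of $h^{*}$, namely $\{-i\bmod 2n:i\in O_{2n}\setminus T\}$; since $i\mapsto -i\bmod 2n$ is a bijection on $O_{2n}$ (any odd $i$ between $1$ and $2n-1$ gives $2n-i$ of the same type), this set equals $\{j\in O_{2n}:-j\bmod 2n\notin T\}$, as claimed.

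The hardest step, in my view, is justifying that $C^{\bot}$ is precisely $\langle h^{*}(x)\rangle$ in the negacyclic setting. In the cyclic case this rests on the identity $\sum_i p_i q_i=[x^{n-1}]\,p(x)q^{*}(x)$ modulo $x^n-1$; here one must carefully track the sign introduced by $x^n\equiv -1$ so that orthogonality in $\mathbb{F}_q^n$ corresponds correctly to a polynomial identity in $R_n$. Once that bookkeeping is in place, the rest of the argument reduces to the root-counting in the previous paragraph.
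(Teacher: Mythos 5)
Your proposal is correct, but note that the paper itself contains no proof of this lemma to compare against: it is quoted directly from Blackford's Theorem~2, and the citation is the entire ``proof.'' Judged on its own, your argument is sound. Step~1 (the adjointness $\langle\sigma(x),y\rangle=\langle x,\sigma^{-1}(y)\rangle$, hence $\sigma(C^{\bot})\subseteq C^{\bot}$) is complete, and the step you flag as hardest does go through exactly as you predict: for vectors $c,b$ of length $n$ one checks that $\langle c,\sigma^{-k}(b)\rangle$ equals the coefficient of $x^{n-1-k}$ in $c(x)\tilde{b}(x)\bmod (x^n+1)$, where $\tilde{b}(x)=x^{n-1}b(x^{-1})$; the wrap-around terms acquire precisely the minus sign that the negacyclic shift imposes, so $c$ is orthogonal to every negacyclic shift of $b$ if and only if $c(x)\tilde{b}(x)\equiv 0\bmod(x^n+1)$. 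Writing $c=fg$ and $b=eh^{*}$ then gives $c\tilde{b}\equiv 0$, and with your dimension count this yields $C^{\bot}=\langle h^{*}\rangle$ (you should also record the trivial facts that $h(0)\neq 0$ and that $h^{*}$ divides $x^n+1$, so $\langle h^{*}\rangle$ really is negacyclic of dimension $\deg g$); the root bookkeeping $\delta^{i}\mapsto\delta^{-i}$ with $-i\bmod 2n$ again odd finishes the identification of $T^{\bot}$. For comparison, Blackford's original proof runs through the discrete Fourier transform, using an identity of the shape $\sum_t a_t b_t=\frac{1}{n}\sum_i A_iB_{-1-i}$ for the transforms (this is the same Lemma~1 of Blackford that the present paper invokes elsewhere), from which the defining set of the dual is read off; a still shorter route pairs the evaluation vectors $v_i=(1,\delta^{i},\ldots,\delta^{(n-1)i})$ and observes that $\langle v_i,v_j\rangle=0$ unless $i+j\equiv 0\bmod 2n$, using $\delta^{n}=-1$ and the parity of $i+j$. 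Your check-polynomial route is the most elementary of the three and is entirely adequate.
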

Let $s \in \{1, \ldots, 2n-1\}$ such that $(s,2n)=1$, a multiplier of
$R_n$ is the map: 
\begin{equation}
\begin{split}
\mu _s : R_n  &  \longrightarrow \mathbb{F}_q^n  \\
a(x) &\longmapsto \mu_s(a(x)) (\mod x^n+1),
\end{split}
\end{equation}
$\mu _s$ is an  automorphism of $R_n$. If $C$ is an ideal of
$R_n$ with defining set $T$ , then $\mu _s(C)$ is an ideal of
$R_n$ with defining set $\{i \in O_{2n} \,|\, si \in
T \}$. $\mu _s$ induces the following map
\begin{equation}
\begin{split}
\mu _s' : O_{2n}  &  \longrightarrow  O_{2n}  \\
i &\longmapsto \mu_s'(i)=si (\mod 2n),
\end{split}
\end{equation}

The multiplier $\mu _{2n-1}=\mu _{-1}$ has the effect to replace $x$
by $x^{-1}$, since $x^{2n}=1$ in $R_n$.

\begin{lemma}\label{lem:th.3.blackford}
(\cite[Theorem 3]{blackford})
If $N=2^an'$ for some odd integer $n'$, then self-dual negacyclic codes
over $\mathbb{F}_q$ of length $N$ exist if and only if
$$q \neq -1 (\mod 2^{a+1}).$$
If $a=1$, then self-dual negacyclic codes over $\mathbb{F}_{q}$ of
length $N$ exist if and only if
$$q\equiv 1 \mod 4.$$
\end{lemma}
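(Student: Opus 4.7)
The plan is to use the dual defining-set formula (the previously stated Lemma of Blackford) to rephrase self-duality as a combinatorial condition on $q$-cyclotomic classes modulo $2N$, and then to reduce that condition to an elementary question about powers of $q$ in $(\mathbb{Z}/2^{a+1}\mathbb{Z})^*$.

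First I would set up the dictionary. A negacyclic code $C$ of length $N$ with defining set $T \subseteq O_{2N}$ is self-dual iff $T = T^{\bot}$, which by the dual formula means: for every $i \in O_{2N}$, exactly one of $i$ and $-i \pmod{2N}$ lies in $T$. Since $T$ must be a union of $q$-cyclotomic classes modulo $2N$, and $i \mapsto -i$ permutes such classes, self-duality can be realised iff the classes in $O_{2N}$ can be grouped into pairs $\{C,-C\}$ with $C \neq -C$. Thus the existence question becomes: \emph{is there any $q$-cyclotomic class modulo $2N$ that satisfies $C=-C$?}

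Next I would analyse when a class can be self-inverse. The class $C(i)$ equals $-C(i)$ iff $q^j i \equiv -i \pmod{2N}$ for some $j$, i.e.\ iff $q^j \equiv -1 \pmod{2N/\gcd(i,2N)}$. For $i \in O_{2N}$ odd, $\gcd(i,2N)=\gcd(i,n')$, and consequently
\[
\frac{2N}{\gcd(i,2N)} = 2^{a+1}\cdot\frac{n'}{\gcd(i,n')},
\]
which is always a multiple of $2^{a+1}$. Reducing modulo $2^{a+1}$, any self-inverse class forces $q^j\equiv -1\pmod{2^{a+1}}$. Conversely, taking $i=n'$ gives modulus exactly $2^{a+1}$, so if $q^j\equiv -1\pmod{2^{a+1}}$ then $C(n')$ is already self-inverse. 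Hence self-dual negacyclic codes of length $N$ exist if and only if the congruence $q^j \equiv -1 \pmod{2^{a+1}}$ has no solution in $j$.

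It remains to translate ``$q^j\equiv -1\pmod{2^{a+1}}$ is unsolvable'' into the simpler condition $q\not\equiv -1\pmod{2^{a+1}}$. For $a=1$, the group $(\mathbb{Z}/4\mathbb{Z})^{*}=\{1,-1\}$ has order two, so $q^j\equiv -1\pmod 4$ holds iff $q\equiv 3\pmod 4$, giving the stated criterion $q\equiv 1\pmod 4$. For $a\ge 2$ I would invoke the classical structure $(\mathbb{Z}/2^{a+1}\mathbb{Z})^{*}\cong\langle -1\rangle\times\langle 5\rangle$, write $q=(-1)^{\epsilon}5^{\alpha}$, and observe that $q^j=-1$ forces $j\epsilon$ odd and $5^{j\alpha}=1$; since $j$ must be odd, $5^{\alpha}=1$ and $\epsilon=1$, i.e.\ $q\equiv -1\pmod{2^{a+1}}$. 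This gives the equivalence claimed.

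The main obstacle is this last step: in contrast to odd prime power moduli, $(\mathbb{Z}/2^{a+1}\mathbb{Z})^{*}$ is not cyclic for $a\ge 2$, so ``$-1$ is a power of $q$'' is not automatic from ``$q$ has even order''. One must use the explicit decomposition via the generator $5$ (mod powers of $2$) to rule out $-1$ lying in $\langle q\rangle$ unless $q$ itself is $-1$. Once that arithmetic lemma is in hand, the construction direction is routine: picking one class out of each pair $\{C,-C\}$ yields a defining set whose associated code is self-dual.
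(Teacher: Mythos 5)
Your argument is correct, and every step holds up under scrutiny: the reduction of self-duality to the non-existence of a $q$-cyclotomic class $C$ with $C=-C$ is valid (the pairing $\{C,-C\}$ is a fixed-point-free involution on classes precisely when no self-negative class exists, and picking one class per pair gives $T=T^{\bot}$ of size $N/2$); the passage from a self-negative class to the congruence $q^j\equiv -1\pmod{2^{a+1}}$ via $2N/\gcd(i,2N)=2^{a+1}n'/\gcd(i,n')$ is correct, as is the converse using $i=n'$; and the final arithmetic step, showing $-1\in\langle q\rangle\pmod{2^{a+1}}$ forces $q\equiv -1\pmod{2^{a+1}}$, is handled properly via the decomposition $(\mathbb{Z}/2^{a+1}\mathbb{Z})^{*}\cong\langle -1\rangle\times\langle 5\rangle$ for $a\geq 2$, which is exactly the point where a naive "cyclic group" argument would break down. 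One thing to be aware of: the paper itself contains no proof of this statement at all --- it is quoted verbatim as Theorem 3 of Blackford's paper on negacyclic duadic codes (reference \cite{blackford} in the bibliography), so there is no internal argument to compare yours against. What you have produced is a self-contained reconstruction of Blackford's result, along what is essentially his original line of attack (dual defining set, self-negative cyclotomic classes, reduction modulo the $2$-part); this is a useful addition rather than a deviation, since it makes the paper's key imported lemma verifiable in place. The only standing hypotheses you use implicitly --- $q$ odd and $\gcd(q,2N)=1$, so that defining sets and cyclotomic classes behave as claimed --- are the same ones the paper assumes throughout its Section 3, so nothing is missing.
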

 As a Corollary of Lemma~\ref{lem:th.3.blackford} the negacyclic code of length $q+1$ and defining set $T= \{i \text{
  odd}:\, 1\leq i \leq q\}$ is an Euclidean self-dual $MDS$ code over
$\mathbb{F}_q$ as proved in~\cite{blackford}. The following results is
  more general than the ones given in~\cite{blackford} . 
\begin{theorem}
\label{Cor:Euc11}
Let $n=2n'$ for some odd integer $n'$, $q$ an odd prime power such
that $q\equiv 1 \mod 4$, $q+1=2n''$, with $n'|n''$ and $n''$ odd. Then there exists
$MDS$ negacyclic Euclidean self-dual code of
parameters $[n,n/2,n/2+1]$ having defining set
$$T=\{\frac{q+1}{2}+i:\, -(n'-1)\leq i \text{ even } \leq (n'-1)\}.$$
\end{theorem}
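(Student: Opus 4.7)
The plan is to identify $T$ explicitly modulo $2n=4n'$, then deduce the $MDS$ property from the negacyclic BCH bound and self-duality from Lemma~\ref{lem:th2.blackford}. Writing $n''=mn'$, the ratio $m=n''/n'$ is necessarily odd since $n''$ and $n'$ are both odd, so modulo $4n'$ one has $n''\equiv n'$ or $3n'$ according as $m\equiv 1$ or $3\pmod 4$. The $n'$ even shifts $i$ with $|i|\le n'-1$ then translate $n''$ into one of the two blocks
$$T=\{1,3,\ldots,2n'-1\}\qquad\text{or}\qquad T=\{2n'+1,2n'+3,\ldots,4n'-1\},$$
each a string of $n'$ consecutive odd integers in $O_{2n}$. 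In particular $|T|=n'=n/2$, so the associated negacyclic code has dimension $n/2$.

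Next I would verify that $T$ is a union of $q$-cyclotomic classes modulo $2n$. From $q=2mn'-1$ with $m$ odd one obtains $q\equiv 2n'-1\pmod{4n'}$, hence $qt\equiv 2n'-t\pmod{2n}$ for every odd $t$; this involution stabilises each of the two candidate blocks, so $T$ is indeed $q$-closed. Since $T$ contains $n'$ consecutive odd integers, the negacyclic BCH bound recalled in Section~3 gives $d\ge n'+1=n/2+1$, and this is matched by the Singleton bound $d\le n-k+1=n/2+1$, yielding the announced $MDS$ parameters.

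For self-duality, Lemma~\ref{lem:th2.blackford} reduces the problem to checking that $T$ and $-T\pmod{2n}$ partition $O_{2n}$. The map $i\mapsto 4n'-i$ sends the odd integers in $[1,2n'-1]$ bijectively to those in $[2n'+1,4n'-1]$ and vice versa, so whichever of the two blocks $T$ equals, its negation is precisely the complementary block of odd residues modulo $2n$. Thus $-T=O_{2n}\setminus T$, which by the lemma forces $T^{\bot}=T$ and makes the code Euclidean self-dual.

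The only delicate step is the congruence analysis modulo $4n'$: although $n''=(q+1)/2$ can dwarf $2n$, the oddness of $m$ forces the shifted residues $n''+i$ to collapse into a single unbroken block of $n'$ consecutive odd integers in $O_{2n}$. Once this compact description of $T$ is in hand, the BCH/Singleton sandwich for the minimum distance and the negation-swap argument for duality both follow immediately from the machinery of Section~3.
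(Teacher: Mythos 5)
Your proof is correct, and at the level of ingredients it matches the paper's: both arguments show that $T$ is a union of $q$-cyclotomic classes modulo $2n$, obtain $d=n/2+1$ from the negacyclic BCH bound together with the Singleton bound, and deduce self-duality from Lemma~\ref{lem:th2.blackford} by showing that $-T$ is exactly the complement of $T$ in $O_{2n}$. Where you genuinely differ is in how the combinatorial core is carried out. The paper never reduces $\frac{q+1}{2}$ modulo $2n$: it computes that conjugation acts as $i\mapsto q+1-i$, so the classes are $C\bigl(\frac{q+1}{2}+i\bigr)=\{\frac{q+1}{2}+i,\frac{q+1}{2}-i\}$, and then checks that these classes are pairwise distinct and that no class meets the negation of another via divisibility arguments run by contradiction (it also appeals to Lemma~\ref{lem:th.3.blackford} to exclude $C(i)=C(-i)$). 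You instead reduce everything modulo $2n=4n'$ at the outset: writing $n''=mn'$ with $m$ odd gives $\frac{q+1}{2}\equiv n'$ or $3n'\pmod{4n'}$, so $T$ is literally one of the two blocks $\{1,3,\ldots,2n'-1\}$ or $\{2n'+1,\ldots,4n'-1\}$, conjugation becomes the involution $t\mapsto 2n'-t$, which visibly preserves each block, and negation $i\mapsto 4n'-i$ visibly interchanges the two blocks. With this explicit description, cyclotomic closure, $|T|=n/2$, and $-T\cap T=\emptyset$ are all immediate, whereas in the paper each requires its own separate argument. The conclusions are identical; your version exposes the structural reason the theorem holds ($O_{2n}$ splits into two half-blocks swapped by negation, and $T$ is one of them), at the cost only of the mild extra observation that $m=n''/n'$ must be odd.
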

\begin{proof}
Consider a negacyclic code $C$ with such length $n$ over $\mathbb{F}_q$. Assume $\delta^{2i'+1}$ is a root of $C$,
hence
$(\delta^{2i'+1})^{q+1}=
\delta^{2i'(q+1)}\delta^{q+1}=\delta^{2jn}\delta^{q+1}=\delta^{q+1}$. 
Then for an odd $i\in O_{2n}$ the conjugate of $\delta^i$ is $\delta^{iq}= \delta^{q+1-i}$. 
Hence we have $C(i)=\{i,q+1-i\}$. It is clear that for $i \in O_{2n}$
we have $|C(i)|\leq 2$. And $i=q+1-i \mod 2n \iff i= \frac{q+1}{2}+kn.$
Hence for $i$ even, such that $ 1\leq i \leq
(n'-1)$ we have
$|C(\frac{q+1}{2}+i)|=|\{\frac{q+1}{2}+i,\frac{q+1}{2}-i\}|=2$ and
for $i=0$ $|C(\frac{q+1}{2})|=1$. 
Now, consider a negacyclic code with the following defining set: 
$$T=\cup_{i=0}^{n'-1}C(\frac{q+1}{2}+i)=\{\frac{q+1}{2}+i:\, -(n'-1)\leq i \text{ even } \leq
(n'-1)\}.$$ Assume there exists  two differents integers $i$ and $j$ such that
 $0 \leq i \leq n'-1$, $0 \leq j \leq n'-1$ and
$C(\frac{q+1}{2}+i)=C(\frac{q+1}{2}+j)$. Hence
$\frac{q+1}{2}+i=\frac{q+1}{2}+j+2kn \iff i-j=2kn$. That is $i-j$ is a
multiple of $2n$. But we have $i-j \leq n$, which is impossible.
Furthermore, from~Lemma~\ref{lem:th.3.blackford} we have
$C(i)\neq C(-i) \mod 2n$. If we assume the existence of two different
$i'$, $j'$ in $T$ such that $C(i')=C(-j')$, hence there exists $i$
and $j$ such that $i'=\frac{q+1}{2}+i$ and $j'=\frac{q+1}{2}+j$. But,
$C(i')=C(-j')\iff \frac{q+1}{2}+i=2kn -\frac{q+1}{2}-j \iff -(q+1+ 2k'n
)= i+j=n(-\frac{q+1}{n}+2k),$ this gives that $n$ divides $i+j$,
which is impossible since $-(n'-1)\leq i,j \leq
(n'-1)$. This implies, that $-T \cap T= \emptyset$ and the redundancy of the code is equal to $n'$, hence
the code is self-dual. The
code is $MDS$, since there is $n'$ successive roots and hence by the
$BCH$ bound the minimum distance is at least $n'+1$, hence by the
Singleton bound we have equality.  
\end{proof}

\begin{table}
\begin{center}
\begin{tabular}{|c|c|c|c|c|c|}
\hline $n$&$q$&$n$&$q$&$n$&$q$ \\
\hline
6&5,17,29,53,197&10&9,29,49,$13^2$&
14&13\\
18&17,53,89,101,197& 
22&109,197&26&25,181,233\\
30&29,89,149&34&101,$13^2$&
38&37,113\\42&41,293,461&50&49,149,&54&53,269\\
\hline
\end{tabular}
\end{center}
\caption{Euclidean Self-dual $MDS$ Codes over $\mathbb{F}_q$ obtained by Theorem~\ref{Cor:Euc11}}
\end{table}

\begin{theorem}
\label{th:Euc12}
Let $n=2^an'$ for some odd integer $n'$, $q$ an odd prime power such
that $q\equiv 1 \mod 2^{a+1}n''$, $n'|n''$ and $n''$odd. Then there exists
$MDS$ negacyclic Euclidean self-dual code of
parameters $[n,n/2,n/2+1]$ having defining set $$T=\{i \text{ odd}:\, 1 \leq i \leq n-1\}.$$
\end{theorem}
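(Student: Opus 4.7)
The plan is to exploit the hypothesis $q \equiv 1 \mod 2^{a+1}n''$ to collapse all $q$-cyclotomic classes modulo $2n$ to singletons, and then conclude by a direct counting argument for the duality together with the negacyclic $BCH$ bound for the minimum distance.

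First, I would translate the congruence into something usable. Since $n' \mid n''$ and $n = 2^a n'$, we have $2n = 2^{a+1}n' \mid 2^{a+1}n''$, so $q \equiv 1 \mod 2^{a+1}n''$ forces $q \equiv 1 \mod 2n$. Consequently $\ord_{2n}(q) = 1$ and every $q$-cyclotomic class $C(i)$ modulo $2n$ is the singleton $\{i\}$. In particular, the proposed set $T = \{i \text{ odd}: 1 \leq i \leq n-1\}$ is a genuine union of $q$-cyclotomic classes, so it really does define a negacyclic code $C$ of length $n$ over $\mathbb{F}_q$. Observe also that $q \equiv 1 \mod 2^{a+1}$ implies $q \not\equiv -1 \mod 2^{a+1}$, which is consistent with the existence condition of Lemma~\ref{lem:th.3.blackford}.

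Next I would establish the dimension and the self-duality. There are exactly $n/2$ odd integers in $\{1,\ldots,n-1\}$, hence $|T| = n/2$ and $\dim C = n - |T| = n/2$. For the self-duality, Lemma~\ref{lem:th2.blackford} gives $T^{\bot} = \{i \in O_{2n} : -i \pmod{2n} \notin T\}$. For any $i \in T$ we have $-i \pmod{2n} = 2n - i \in \{n+1, n+3, \ldots, 2n-1\}$, which is disjoint from $T$, so $i \in T^{\bot}$. This yields $T \subseteq T^{\bot}$, and since $i \mapsto -i$ is a bijection on $O_{2n}$ we get $|T^{\bot}| = |O_{2n}| - |T| = n - n/2 = n/2 = |T|$. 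Therefore $T = T^{\bot}$ and $C = C^{\bot}$.

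Finally, the $MDS$ property is immediate from the negacyclic $BCH$ bound: $T$ contains the $n/2$ consecutive odd integers $1, 3, \ldots, n-1$, so $d \geq n/2 + 1$, and the Singleton bound $d \leq n - k + 1 = n/2 + 1$ forces equality. The only non-routine observation in the whole argument is the first one, namely that the divisibility chain $n' \mid n''$ together with $q \equiv 1 \mod 2^{a+1}n''$ trivializes every $q$-cyclotomic class modulo $2n$; once this is in place, the proof proceeds in the same spirit as Theorem~\ref{Cor:Euc11} but is considerably simpler, because no pairing of the form $\{i, q+1-i\}$ is needed to assemble $T$.
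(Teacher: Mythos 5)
Your proposal is correct and follows essentially the same route as the paper's own proof: the hypothesis $q\equiv 1 \bmod 2^{a+1}n''$ forces $q\equiv 1 \bmod 2n$, so all $q$-cyclotomic classes modulo $2n$ are singletons, after which disjointness of $T$ and $-T$ plus the count $|T|=n/2$ gives $T^{\bot}=T$ via Lemma~\ref{lem:th2.blackford}, and the negacyclic $BCH$ bound with the Singleton bound gives the $MDS$ property. Your direct verification that $-i \bmod 2n = 2n-i \in \{n+1,\ldots,2n-1\}$ lies outside $T$ is a slightly cleaner phrasing of the paper's contradiction argument ($i+j=2nk$ being impossible), but it is the same idea.
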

\begin{proof}
In this case we have $\xi \in \mathbb{F}_q$, hence $\xi^q=\xi$. We will
show that the conjugate of $\delta^{2i+1}=\delta \xi ^i$ is exactly it
self. This means that each cyclotomic class contains only one
element. Namely $$(\delta \xi ^i)^q=\delta^q \xi
^i=\delta\delta^{q-1} \xi=\delta(\delta^{2n})^\frac{q-1}{2n} \xi ^i= \delta \xi ^i .$$
Now we consider the negacyclic code with defining set $T=\{i
\text{ odd}:\, 1 \leq i \leq n-1\}$, by Lemma~\ref{lem:th.3.blackford} we
obtain $C(i)\neq C(-i)$. Furthermore, for different $i$ and $j$ in $T$
we cannot have $C(i)=C(-j)$. Because, if i is the case we will have
$2nk-i=j$, since in each class there is only one element. Hence
$i+j=2nk$, which is impossible, because $i\leq n-1$ and $j\leq
n-1$. Which implies $-T \cap T=\emptyset$ and
$|T|=\frac{n}{2}$. Then from Lemma~\ref{lem:th2.blackford}, we obtain $T^{\bot}=T$. Hence the code
is self-dual. By the $BCH$ bound the minimum distance is $\frac{n}{2}+1$  
\end{proof}

\begin{table}
\begin{center}
\begin{tabular}{|c|c|c|c|c|c|}
\hline $n$&$q$&$n$&$q$&$n$&$q$ \\
\hline
6&13,25,37,49,61,73&10&41,61,81&12&49,73,97\\
14&29,113&18&37,73,109&20&41,81,121\\
24&49,193&26&53,157&28&169,281,337\\
30&61,121,181&34&409&36&73,433\\
\hline
\end{tabular}
\end{center}
\caption{Euclidean Self-dual $MDS$ Codes over $\mathbb{F}_{q}$ obtained by Theorem~\ref{th:Euc12}}
\end{table} 

\begin{lemma}
\label{lem:herm.dual}
Let $C$ be a negacyclic code of length $n$ over $\mathbb{F}_{q^2}$
with defining set $T$. Then the Hermitian dual is a negacyclic code
with defining set $$T^{\bot h}= O_{2n} \setminus (-iq) T\}.$$
\end{lemma}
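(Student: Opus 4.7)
The plan is to reduce Hermitian duality to Euclidean duality via the coordinatewise Frobenius map, and then invoke Lemma~\ref{lem:th2.blackford} directly. Denote by $\overline{C} := \{(c_0^q, \ldots, c_{n-1}^q) : (c_0, \ldots, c_{n-1}) \in C\}$ the Frobenius conjugate of $C$. The first step is to observe the identity $C^{\bot h} = \overline{C}^{\bot}$: if $\sum_j c_j^q y_j = 0$ for every $c \in C$, then raising to the $q$-th power and using $c_j^{q^2} = c_j$ (since $c_j \in \mathbb{F}_{q^2}$) yields $\sum_j c_j\, y_j^q = 0$, and the converse is identical. This reduces the problem to describing the defining set of $\overline{C}$ and then applying the Euclidean-dual description of Lemma~\ref{lem:th2.blackford}.

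Next I would verify that $\overline{C}$ is itself a negacyclic code and determine its defining set. Since $x \mapsto x^q$ is an additive automorphism that commutes with multiplication by $x$, $\overline{C}$ is an ideal of $R_n = \mathbb{F}_{q^2}[x]/(x^n+1)$. Writing $\overline{c}(x) = \sum_j c_j^q x^j$ and evaluating at the root $\delta^i$ (with $\delta$ the fixed primitive $2n$-th root of unity underlying the defining set $T$) gives
\[
\overline{c}(\delta^i)^q = \sum_j c_j^{q^2} \delta^{ijq} = c\bigl(\delta^{iq \bmod 2n}\bigr),
\]
so $\overline{c}(\delta^i) = 0$ for every $c \in C$ iff $iq \bmod 2n \in T$. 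Hence the defining set of $\overline{C}$ is $q^{-1}T := \{i \in O_{2n} : iq \bmod 2n \in T\}$, where $q^{-1}$ is computed modulo $2n$ (which makes sense because $(q, 2n) = 1$).

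Finally, applying Lemma~\ref{lem:th2.blackford} to the negacyclic code $\overline{C}$ gives
\[
T^{\bot h} = \{i \in O_{2n} : -i \bmod 2n \notin q^{-1}T\} = \{i \in O_{2n} : -iq \bmod 2n \notin T\} = O_{2n} \setminus (-q)T,
\]
where $(-q)T = \{-qt \bmod 2n : t \in T\}$, matching the claim. The main (mild) obstacle is simply the bookkeeping: one must keep track that, since $C$ is a code over $\mathbb{F}_{q^2}$, its defining set $T$ is a union of $q^2$-cyclotomic classes modulo $2n$, so $q^2 T = T$ and hence $qT = q^{-1}T$. This $q^2$-invariance makes the formula $(-q)T$ independent of whether one writes $-q$ or $-q^{-1}$ when threading the Frobenius through the calculation, which is the only subtlety worth checking.
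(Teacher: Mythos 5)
Your proof is correct and takes essentially the same route as the paper's: both pass to the Frobenius conjugate $\overline{C}$, observe $C^{\bot h}=\overline{C}^{\bot}$, apply Lemma~\ref{lem:th2.blackford} to $\overline{C}$, and use the $q^2$-invariance of $T$ to convert the resulting set into $O_{2n}\setminus(-q)T$. The only difference is cosmetic: you compute the defining set of $\overline{C}$ directly by evaluating at powers of $\delta$ and raising to the $q$-th power, where the paper instead identifies $\overline{C}=\mu_q(C)$ by citing an analogous argument from the literature.
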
 
\begin{proof}
Let $\overline{C}=\{(a_0^q,\ldots,a_{n-1}^q)\, : \,
(a_0,\ldots,a_{n-1}) \in C \}$. By an analogous argument
as in ~\cite[Proposition 3.1]{lilibeth} one can show that
$\overline{C}=\mu_{q}(C)$. This gives that the code $\overline{C}$ is
a negacyclic code with defining set $T_{\overline{C}}=qT=\{iq \,:\, i
\in T\}$. By noticing that $C^{\bot h}=\overline{C}^{\bot }$, we
get that $$T^{\bot }_{\overline{C}}=\{i\in O_{2n}\, :\, -i (\mod 2n)
\notin qT\}.$$ Since $\mu_q$ is an automorphism on $R_n$, hence induces
a permutation acting on the elements of $O_{2n}$. Thus we have:
\begin{equation}
\label{eq2:equiv}
-i (\mod 2n) \notin qT  \iff -qi \mod 2n \notin q^2T.
\end{equation}
But over $\mathbb{F}_{q^2}$ all the cyclotomic classes are stable by
multiplication by $q^2$, hence the equation~(\ref{eq2:equiv}) is
equivalent to $-qi \mod 2n \notin T$. Then, 
$$T^{\bot h}=\{i \in O_{2n}\, :\, -iq (\mod 2n) \notin T\}= O_{2n}
\setminus (-q) T\}.$$
\end{proof} 
\begin{proposition}
\label{prop:her.dual}
If $N=2^an'$ for some odd integer $n'$, there exists a Hermitian self-dual code
over $\mathbb{F}_{q^2}$ of length $N$ if and only if 
\begin{equation}
\label{eq:odd}
q \neq -1 \mod 2^{a+1}.
\end{equation}
\end{proposition}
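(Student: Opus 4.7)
The plan is to translate Hermitian self-duality of a negacyclic code into a defining-set splitting condition under the multiplier $\mu_{-q}$, and then to reduce the resulting combinatorial question to a purely $2$-adic arithmetic one.

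First, by Lemma~\ref{lem:herm.dual}, a negacyclic code $C \subseteq \mathbb{F}_{q^2}^N$ with defining set $T \subseteq O_{2N}$ satisfies $C = C^{\bot h}$ if and only if $T$ and $(-q)T$ partition $O_{2N}$. Since any such $T$ must be a union of $q^2$-cyclotomic classes modulo $2N$, and since multiplication by $-q$ sends $q^2$-classes to $q^2$-classes, such a $T$ exists if and only if $\mu_{-q}$ fixes no $q^2$-cyclotomic class in $O_{2N}$; when no class is fixed, choosing one representative from each $\mu_{-q}$-pair of classes produces the desired splitting.

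Next I would analyze when a class $C(i)$ is fixed. The equation $-qi \equiv q^{2j}i \pmod{2N}$ rearranges to $q^{2j-1} \equiv -1 \pmod{m_i}$ with $m_i = 2N/\gcd(i, 2N)$. Since $i \in O_{2N}$ is odd, $\gcd(i, 2N)$ is odd; combined with $2N = 2^{a+1}n'$, this forces $2^{a+1} \mid m_i$. Consequently, some class is fixed if and only if some odd power of $q$ is congruent to $-1$ modulo $2^{a+1}$: the ``only if'' follows from the divisibility, and the ``if'' is realised explicitly by the singleton class $\{n'\}$, for which one checks by CRT that $-qn' \equiv n' \pmod{2N}$ whenever $q \equiv -1 \pmod{2^{a+1}}$ (both sides vanish modulo $n'$ and equal $n'$ modulo $2^{a+1}$).

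The final arithmetic step is to show that some odd power of $q$ equals $-1$ modulo $2^{a+1}$ if and only if $q$ itself does. For $a = 1$ this is immediate since $q^{2j-1} \equiv q \pmod 4$. For $a \ge 2$ I would use the decomposition $(\mathbb{Z}/2^{a+1}\mathbb{Z})^{\times} = \langle -1 \rangle \times \langle 5 \rangle$ with $\langle 5 \rangle$ of order $2^{a-1}$: writing $q = (-1)^{\varepsilon} 5^{k}$, the equation $q^{2j-1} \equiv -1$ forces $\varepsilon = 1$ and, since $2j-1$ is a unit modulo $2^{a-1}$, $k \equiv 0 \pmod{2^{a-1}}$, whence $q \equiv -1 \pmod{2^{a+1}}$. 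The main obstacle is precisely this $2$-adic collapse; everything else parallels Blackford's Euclidean argument once $\mu_{-1}$ is replaced by $\mu_{-q}$ and $q$-classes by $q^2$-classes.
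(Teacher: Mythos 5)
Your proof is correct, and its skeleton is the same as the paper's: use Lemma~\ref{lem:herm.dual} to turn Hermitian self-duality into the condition that $T$ and $(-q)T$ partition $O_{2N}$, reduce existence to whether $\mu_{-q}$ fixes some $q^2$-cyclotomic class, and reduce that in turn to a congruence modulo $2^{a+1}$. Where you differ is in completeness, and the difference is real. The paper's proof only establishes the obstruction direction (a fixed class prevents self-duality), works essentially with the single representative $i=n'$, and ends with the unproved assertion that ``it is sufficient only to check that $q\equiv -1 \bmod 2^{a+1}$''. It never shows that the absence of fixed classes actually yields a self-dual code, and it never justifies the key arithmetic fact that $2^{a+1}\mid q^{2m-1}+1$ for some $m$ forces $q\equiv -1 \pmod{2^{a+1}}$. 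You supply both missing pieces: the involution argument (applying $\mu_{-q}$ twice gives $\mu_{q^2}$, which fixes every class, so the classes fall into $\mu_{-q}$-pairs and one may take one class from each pair to build $T$), and the $2$-adic collapse via $(\mathbb{Z}/2^{a+1}\mathbb{Z})^{\times}=\langle -1\rangle\times\langle 5\rangle$, which is exactly the step the paper uses silently. Your treatment of an arbitrary fixed class through $m_i=2N/\gcd(i,2N)$ is also the right way to get the ``only if'' direction, since restricting to $i=n'$ as the paper does handles only the converse construction. In short: same route, but your version is the one that actually proves the stated equivalence.
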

\begin{proof} From Lemma~\ref{lem:herm.dual}, the code $C$ is Hermitian self-dual if
and only if we have $T=O_{2n} \setminus (-iqT)$, hence $C$ is Hermtian
self-dual if its defining set $T$ satisfies
the following 
\begin{equation}
2N -iq \notin T \iff i \in T.
\end{equation}
Then, if there exists an odd $i \in O_{2N}$, such that
$C_{q^2}(i)=C_{q^2}(-qi) \mod 2N$, the code $C$ is not self-dual. If a
such $i$ exists, then there is an integer $m$ such that $-iq \equiv q^{2m} i (\mod
2N)$. Hence,
$2^{a+1}n'k=(q^{2m-1}+1)qi$ i.e., $2^{a+1}n' | (q^{2m-1}+1)qi$. Since
$n'$ is odd we can choose $i$ such that $n' \equiv i \mod 2N)$. 
We need only check that $2^{a+1} |(q^{2m-1}+1)q$. Since $q$ is odd
hence
$2^{a+1} |  q^{2m-1}+1$. Thus it is sufficient only to check that $q\equiv -1
\mod 2^{a+1}$.
 \end{proof}

For $a=1$, the equation~(\ref{eq:odd}) becomes $q \equiv 1 \mod 4$,
hence the following Corollary. 
\begin{corollary}
If $N=2n'$, for some integer $n'$, then Hermitian self-dual negacyclic codes
over $\mathbb{F}_{q^2}$ of length $n$ exist if and only if $$q \equiv1
\mod 4.$$
\end{corollary}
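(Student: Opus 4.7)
The plan is to deduce this corollary as the special case $a=1$ of Proposition~\ref{prop:her.dual}. First I would write $N = 2n' = 2^1 n'$ so that $a = 1$, and then invoke the proposition to conclude that a Hermitian self-dual negacyclic code of length $N$ over $\mathbb{F}_{q^2}$ exists if and only if $q \not\equiv -1 \pmod{2^{a+1}}$, i.e., $q \not\equiv -1 \pmod{4}$.

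Next I would translate this non-congruence into a positive congruence. Since we are in characteristic different from $2$, the prime power $q$ is odd, so $q \bmod 4 \in \{1, 3\}$. The forbidden residue $-1 \equiv 3 \pmod 4$ is exactly the residue $3$, leaving only $q \equiv 1 \pmod 4$ as the permitted case. Hence the biconditional in Proposition~\ref{prop:her.dual} reduces to the equivalence
\[
\text{Hermitian self-dual negacyclic code of length } N \text{ over } \mathbb{F}_{q^2} \text{ exists} \iff q \equiv 1 \pmod 4,
\]
which is precisely the statement of the corollary.

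There is essentially no obstacle here: the result is purely a specialization of the preceding proposition combined with the elementary observation that the only odd residues modulo $4$ are $1$ and $-1 \equiv 3$. The only small point worth making explicit is that $q$ being a prime power of characteristic different from $2$ (which is implicit since we are talking about $\mathbb{F}_{q^2}$ in the context of negacyclic codes of even length with $(n,q)=1$) guarantees $q$ is odd, so the dichotomy $q \equiv \pm 1 \pmod 4$ applies and the corollary follows at once.
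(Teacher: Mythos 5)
Your proof is correct and is essentially identical to the paper's own argument: the paper derives the corollary by setting $a=1$ in Proposition~\ref{prop:her.dual}, so that the condition $q\not\equiv -1 \pmod{2^{a+1}}$ becomes $q\not\equiv -1\pmod 4$, which for odd $q$ is exactly $q\equiv 1\pmod 4$. The only detail worth flagging is that the proposition requires $n'$ odd (so that $a=1$ is genuinely the $2$-adic valuation of $N$), a hypothesis the corollary's wording omits but which your specialization implicitly uses.
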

\begin{theorem}
\label{th.main4}
Let $n=2^an'$, $a>1$ and $q \equiv 1 \mod 2^an''$, such that $n' |n''$         
and $n''$ odd. Then there exists an $MDS$ negacyclic codes which is
Hermitian self-dual with defining set
$$T= \{i \text{ odd}: 1 \leq i \leq n-1 \}.$$
\end{theorem}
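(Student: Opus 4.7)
The plan is to mirror the proof of Theorem~\ref{th:Euc12}, substituting the Hermitian-duality tools of Lemma~\ref{lem:herm.dual} and Proposition~\ref{prop:her.dual} for their Euclidean analogues. First I would check the existence hypothesis of Proposition~\ref{prop:her.dual}: since $a>1$ and $q\equiv 1\bmod 2^a$, we have $q\not\equiv -1\bmod 2^{a+1}$, so condition~(\ref{eq:odd}) holds and Hermitian self-dual negacyclic codes of length $n$ over $\mathbb{F}_{q^2}$ can exist. Second, I would show that every $q^2$-cyclotomic class modulo $2n$ is a singleton: writing $q-1=2^a n'' k$, the relation $n'\mid n''$ gives $2^a n'\mid q-1$, and since $q+1$ is even we obtain $2n=2^{a+1}n'\mid q^2-1$. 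Hence $q^2\equiv 1\bmod 2n$ and every element of $O_{2n}$ forms its own $q^2$-class, so $T=\{i\text{ odd}:1\le i\le n-1\}$ is automatically a union of cyclotomic classes.

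Next I would verify the Hermitian self-duality condition $T=T^{\bot h}$ from Lemma~\ref{lem:herm.dual}, which amounts to showing $(-q)T=O_{2n}\setminus T=\{n+1,n+3,\dots,2n-1\}$. For $i\in T$ I would compute $-qi\bmod 2n$ by reducing $q$ modulo $2n$ using the divisibility hypothesis; the target identity is $-qi\equiv 2n-i\bmod 2n$, which places $-qi$ in the complement of $T$. Since $\mu_{-q}$ acts as a bijection on $O_{2n}$ and $|T|=|O_{2n}\setminus T|=n/2$, this gives $(-q)T=O_{2n}\setminus T$, so $T=T^{\bot h}$ and the code is Hermitian self-dual. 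The $MDS$ property then follows from the negacyclic $BCH$ bound applied to the $n/2$ consecutive odd integers $1,3,\dots,n-1$ of $T$: one gets $d\ge n/2+1$, and the Singleton bound forces equality, yielding the claimed $[n,n/2,n/2+1]$ parameters.

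The hard part will be the congruence reduction in step three. The hypothesis $q\equiv 1\bmod 2^a n''$ directly delivers only $q\equiv 1\bmod n$, so a careful $2$-adic analysis of $q-1$ is needed to upgrade this to $q\equiv 1\bmod 2n$ (or to the equivalent parity statement guaranteeing that $-qi\bmod 2n$ lands in the complement of $T$ rather than back inside $T$). Exploiting $a>1$ and the oddness of $n''$ is crucial here, and this is the point at which the precise form of the hypothesis enters and where the Hermitian case demands more bookkeeping than the Euclidean parallel of Theorem~\ref{th:Euc12}.
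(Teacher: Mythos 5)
You follow the paper's strategy step by step, and three of your four steps are sound and coincide with what the paper does: the existence check (since $q\equiv 1\bmod 2^a$ and $a>1$, $q\not\equiv -1\bmod 2^{a+1}$, so Proposition~\ref{prop:her.dual} applies), the singleton-class argument ($2n=2^{a+1}n'$ divides $(q-1)(q+1)=q^2-1$, so every $q^2$-cyclotomic class modulo $2n$ is a singleton), and the BCH/Singleton finish. The genuine gap is the step you yourself flag as ``the hard part'', and it cannot be closed: the upgrade from $q\equiv 1\bmod n$ to your target identity $-qi\equiv 2n-i\pmod{2n}$ is equivalent to $2n\mid q-1$, and the stated hypotheses do not deliver it. Concretely, write $q-1=mn$; the hypothesis $q\equiv 1\bmod 2^an''$ with $n'\mid n''$ gives $n\mid q-1$ but says nothing about the parity of $m$. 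For odd $i$ one has $mni\equiv (m\bmod 2)\,n\pmod{2n}$, hence $-qi\equiv -i-(m\bmod 2)\,n\pmod{2n}$. If $m$ is even, your identity holds and self-duality follows; but if $m$ is odd, then $-qi\equiv n-i\pmod{2n}$, which lies back inside $T$, so $(-q)T=T$ and $T^{\bot h}=O_{2n}\setminus T\neq T$: the code is \emph{not} Hermitian self-dual. A counterexample meeting every hypothesis of the theorem is $n=12$ (so $a=2$, $n'=n''=3$) and $q=13$: then $-q\equiv 11\pmod{24}$ and $11\cdot T=T$, so the length-$12$ negacyclic code over $\mathbb{F}_{169}$ with defining set $T=\{1,3,5,7,9,11\}$ fails to be self-dual, even though this very pair $(n,q)$ appears in the paper's table for this theorem.

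So the statement itself is false as written; it becomes true, and your argument then works verbatim, under the stronger hypothesis $q\equiv 1\bmod 2^{a+1}n''$ (equivalently $2n\mid q-1$). You should also know that the paper's own proof stumbles at exactly the point you deferred: it reduces, as you would, to showing that $-qi\equiv j\pmod{2n}$ is impossible for $i,j\in T$, correctly deduces that this would force $2^an'\mid i+j$, and then claims a contradiction from ``$i$ and $j$ odd gives $i+j=2(1+2k'')$''. That parity assertion is false---the sum of two odd numbers can be divisible by $4$ (in the counterexample, $i=1$, $j=11$, $i+j=12=n$)---so the published argument does not fill your gap either. Your instinct that all the difficulty lives in this congruence was exactly right; the resolution is a stronger hypothesis, not a finer $2$-adic argument.
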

\begin{proof}
If $q\equiv 1 \mod 2^an''$, then $q \neq -1 \mod 2^{a+1}$. Because if
it is the case, then $q=-1+k2^{a+1}$ and $q=1+2^an''k'$, by  summing
the two quantities of $q$ and dividing by 2 both sides, we have
$q=2^{a-1}(n''k'+2k)$. This implies that $q$ is even, since $a>1$, which is impossible. Hence by Proposition~\ref{prop:her.dual}
  we have $C_{q^2}(-qi)\neq \{i\}$, since we proved that $q\neq
  -1+k2^{a+1}$. For these parameters we have $\xi \in
  \mathbb{F}_{q^2}$. Then by a similar argument as Theorem~\ref{th:Euc12} we
  have that $C_{q^2}(i)=\{i\}$. Now, we prove that for $i,j \in T$ we
  cannot have $C_{q^2}(-qi)=\{j\}$. Assume it is the case,
  then we
  will have $2kn-qi=j$. The last equality is equivalent to
  $2^{a+1}n'k-2^an''k'=i+j \iff 2^a(2k-\frac{n''k'}{n'})=i+j$. Hence
  $2^an'$ divides $i+j$. But $i$ and $j$ odd gives
  $i+j=2(1+2k'')$. This gives a contradiction since we assumed $a>1$. Hence by Lemma~\ref{lem:herm.dual} we obtain $T^{\bot
    h}=T$. Hence the code is Hermitian self-dual. By the $BCH$ bound
  the minimum distance is $n/2+1$.   
\end{proof}

\begin{table}
\begin{center}
\begin{tabular}{|c|c|c|c|c|c|}
\hline $n$&$q$&$n$&$q$&$n$&$q$ \\
\hline
12&13,$5^2$,37,$7^2$,97&20&41,61,81,101,181&
24&$5^2$,$7^2$,73,97,121\\
28&29,113,197&36&37,73,109&
40&41,$9^2$,$11^2$\\
42&43,127&48&$7^2$,97&60&61,181\\
44&89,353&48&97,193,241,281,337&52&53,157,313,\\
\hline
\end{tabular}
\end{center}
\caption{Hermitian Self-dual $MDS$ Codes over $\mathbb{F}_{q^2}$ obtained by Theorem~\ref{th.main4}}
\end{table} 
A generalization of the splitting of $n$ to the negacyclic codes whose
introduced in~\cite{blackford}.

A $q$ splitting of $n$ is a multiplier $\mu_s$ of $n$ that induce a
partition of $O_{2n}$, such that
\begin{enumerate}
\item $O_{2n} = A_1 \cup A_2 \cup X$
\item $S_1$, $S_2$ and $X$ are unions of $q$ cyclotomic classes.
 \item $\mu_s'(S_i)=S_{i+1 (\mod 2)}$ and $\mu_s'(X)=X$. 
\end{enumerate}
A $q$ splitting is of type $I$, if $X= \emptyset$. A $q$
splitting is of type $II$ if $X=\{\frac{n}{2},\frac{3n}{2}\}.$
\begin{definition}
A negacyclic code $C$ of length $n$ over $\mathbb{F}_{q}$ is 
duadic if there exists a such splitting and the defining set
is one of the subset $S_i$ or  $S_i\cup X$. If the splitting is of
type II, then there exists polynomials $A_i(x)$ such that
$x^n+1=A_1(x)A_2(x)(x^2+1)$ and $\mu_s(A_i(x))=A_{i+1}(x)$.
\end{definition} 
\begin{remark}
An Euclidean respectively Hermitian self-dual negacyclic code is
duadic with multiplier $\mu_{-1}$ respectively $\mu_{-q}$ and comes from type $I$ splitting.
\end{remark}

In the next we consider negacyclic code with length $n=2p^t$, with $p$
an odd prime.
\begin{lemma}\label{lem:existence.split}(\cite[Theorem 8]{blackford})
If $p$, $q$ are distinct odd primes, $q\equiv 3 \mod 4$ and $r=\ord_{2p^t}$ is the order of $q$ modulo $2p^t$, then we have the following which holds.
\begin{enumerate}
\item
 There exists a $q$ splitting of $n=2p^t$ of type $II$. 
\item $\mu_{-1}$ gives a splitting of $n$ of type $II$ if and only if
  $r\neq 2 \mod 4$.
\end{enumerate} 
\end{lemma}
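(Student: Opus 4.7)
The plan is to work in $\mathbb{Z}/2n\mathbb{Z} = \mathbb{Z}/4p^t\mathbb{Z}$ via the CRT decomposition $\mathbb{Z}/4p^t\mathbb{Z} \cong \mathbb{Z}/4\mathbb{Z} \times \mathbb{Z}/p^t\mathbb{Z}$. An odd element $i \in O_{2n}$ corresponds to a pair $(a,b)$ with $a \in \{1,3\}$; the set $X = \{p^t, 3p^t\}$ is precisely the pairs with $b = 0$; and the hypothesis $q \equiv 3 \bmod 4$ means that $\mu_q$ acts as $(a,b) \mapsto (-a \bmod 4,\, qb \bmod p^t)$. Any multiplier $\mu_s$ with $\gcd(s, 2n) = 1$ sends $X$ into itself (since $sp^t \in \{p^t, 3p^t\}$), so constructing a type~II splitting reduces to partitioning $O_{2n} \setminus X$ into two unions of $\mu_q$-cyclotomic classes $S_1, S_2$ swapped by some multiplier.

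For part~(1), I would split according to whether $q$ is a quadratic residue modulo $p$. If $q$ is a QR, the $\mu_q$-action preserves the QR/NR status of $b$; setting $S_1 = \{(a,b) : b \text{ a nonzero QR}\}$ and $S_2 = \{(a,b) : b \text{ a NR}\}$ gives two $\mu_q$-invariant sets, and any NR $s$ coprime to $2n$ yields $\mu_s'(S_1) = S_2$. If $q$ is a NR, both coordinates flip in a correlated way under $\mu_q$, and the sets $S_1 = \{(1,\text{QR})\} \cup \{(3,\text{NR})\}$, $S_2 = \{(1,\text{NR})\} \cup \{(3,\text{QR})\}$ are $\mu_q$-stable; any $s \equiv 1 \bmod 4$ which is a NR modulo $p$ swaps them, and such $s$ exists by CRT. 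Either way one obtains a type~II splitting.

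For part~(2), $\mu_{-1}$ gives a type~II splitting precisely when $-i \notin C_q(i)$ for every $i \in O_{2n} \setminus X$ (note $\mu_{-1}$ automatically swaps the two points of $X$). The condition $-i \in C_q(i)$ amounts to $q^j \equiv -1 \pmod{2n/\gcd(i,2n)}$ for some $j$, and as $\gcd(i,2n)$ ranges over $\{p^s : 0 \le s < t\}$ as $i$ varies, this becomes $-1 \notin \langle q\rangle \pmod{4p^u}$ for every $u \in \{1,\dots,t\}$. Via CRT on $(\mathbb{Z}/4p^u\mathbb{Z})^* \cong \{\pm 1\} \times (\mathbb{Z}/p^u\mathbb{Z})^*$, one checks that $-1 \in \langle q\rangle \pmod{4p^u}$ iff $\ord_{p^u}(q) \equiv 2 \bmod 4$ (the $\bmod\,4$ component forces the exponent realising $-1$ to be odd, which together with $q^{r_u/2} \equiv -1 \pmod{p^u}$ requires $r_u/2$ odd). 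Since $(\mathbb{Z}/p^t\mathbb{Z})^*$ is cyclic, $\ord_{p^u}(q)$ equals $r$ divided by a power of the odd prime $p$, so its $2$-adic valuation equals that of $r$; hence $r \equiv 2 \bmod 4$ iff some (equivalently, every) $\ord_{p^u}(q) \equiv 2 \bmod 4$, which gives the claimed equivalence.

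The main obstacle is the non-coprime case $\gcd(i, 2n) > 1$: the question "$-i \in C_q(i)$?" must be checked at every intermediate modulus $4p^u$, not just $4p^t$, so one must rule out $-1 \in \langle q \rangle$ at each such level. The clean resolution is the observation that all $\ord_{p^u}(q)$ share the same $2$-adic valuation as $r$, collapsing the whole family of conditions to the single criterion $r \not\equiv 2 \bmod 4$. Part~(1), by contrast, is essentially formal once the CRT picture is set up and the quadratic character of $q$ modulo $p$ is used to organise the $\mu_q$-orbits.
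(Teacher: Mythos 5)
The paper never proves this lemma at all: it is imported verbatim as Theorem~8 of Blackford's paper on negacyclic duadic codes, so there is no internal proof to measure your argument against, and it must stand on its own. Judged that way, your part~(2) is complete and correct: the reduction of ``$\mu_{-1}$ gives a type~II splitting'' to ``no $q$-cyclotomic class modulo $4p^t$ is fixed by negation'', the translation of $-i\in C_q(i)$ into $-1\in\langle q\rangle \bmod 4p^{t-s}$ where $p^s=\gcd(i,2n)$, the parity analysis via $(\mathbb{Z}/4p^u\mathbb{Z})^*\cong\{\pm1\}\times(\mathbb{Z}/p^u\mathbb{Z})^*$ showing this happens exactly when $\ord_{p^u}(q)\equiv 2 \bmod 4$, and the observation that $\ord_{p^u}(q)$ and $r$ have the same $2$-adic valuation (because their quotient is a power of the odd prime $p$) are precisely the right ingredients; the non-coprime case $\gcd(i,2n)>1$ is the step most attempts would miss, and you handled it.

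There is, however, a genuine (though repairable) gap in part~(1) when $t\geq 2$: your sets $S_1,S_2$ classify the second CRT coordinate $b\in\mathbb{Z}/p^t\mathbb{Z}$ as ``QR'' or ``NR'', but this dichotomy only makes sense for units. Elements $i\in O_{2n}$ with $p\mid i$ but $p^t\nmid i$ have $b=p^uc$ with $0<u<t$ and $\gcd(c,p)=1$; these lie outside $X$ yet are assigned to neither $S_1$ nor $S_2$, so what you build is not a partition of $O_{2n}$ (for $t=1$ this issue is vacuous, since $p\mid i$ forces $i\in X$). The repair uses the same care you applied in part~(2): grade each nonzero $b=p^uc$ by the Legendre symbol of $c$ modulo $p$. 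Multiplication by $q$ (respectively by your swapping multiplier $s$) multiplies this symbol by $\left(\frac{q}{p}\right)$ (respectively $\left(\frac{s}{p}\right)$), so both of your case constructions, and the existence of the swapping multiplier via CRT, go through verbatim with this refined grading, and part~(1) then holds for all $t$.
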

\begin{remark}
\label{rem:1}
 We have $r=\ord_{2p^t}q= lcm(\ord_{2}q,\ord_{p^t}q)=
\ord_{p^t}q$, since $q$ is odd. Let $z$ be the largest integer such that $p^z | (q^t-1)$, with
$t$ order of $q$ modulo $p$. Hence if $z=1$,
we have $\ord_{p^t}q=p^{t-1}\ord_pq$~\cite[Lemma 3.5.4]{guenda}.
Hence if $\ord_pq$ is odd or $\ord_pq\equiv 0 \mod 4$, then $r \neq 2
\mod 4$. Hence from Lemma~\ref{lem:existence.split} the multiplier $\mu_{-1}$
gives a splitting of $n$ of type $II$.
\end{remark}
\begin{lemma}
\label{lem:quadra} Let $p$ and $q$ be odd prime number, hence we have the following.
\begin{enumerate}
\item If $p \equiv 1 \mod 4$ and $(\frac{q}{p})=-1$, hence $\ord_p q \equiv 0 \mod 4$.
\item If $(\frac{q}{p})=1$ and $p\equiv 3 \mod 4$, hence $\ord_p q$ is odd.
\end{enumerate} 
\end{lemma}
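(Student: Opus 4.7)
The plan is to reduce both claims to elementary statements about the cyclic group $(\mathbb{Z}/p\mathbb{Z})^*$, via Euler's criterion
$$q^{(p-1)/2}\equiv \left(\tfrac{q}{p}\right)\pmod{p}.$$
Let $d=\ord_p q$; by Lagrange $d$ divides $p-1$.

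For the first claim, suppose $p\equiv 1\bmod 4$ and $(q/p)=-1$. Euler's criterion gives $q^{(p-1)/2}\equiv -1\pmod p$, so $d$ divides $p-1$ but does \emph{not} divide $(p-1)/2$. Writing $v_2$ for the $2$-adic valuation, I would compare $v_2(d)$ with $v_2(p-1)$: the divisibility $d\mid p-1$ forces $v_2(d)\leq v_2(p-1)$, while the non-divisibility $d\nmid (p-1)/2$ forces $v_2(d)\geq v_2(p-1)$. Hence $v_2(d)=v_2(p-1)\geq 2$ (the last inequality because $p\equiv 1\bmod 4$), which is exactly $4\mid d$.

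For the second claim, suppose $p\equiv 3\bmod 4$ and $(q/p)=1$. Euler's criterion now gives $q^{(p-1)/2}\equiv 1\pmod p$, so $d$ divides $(p-1)/2$. But $p\equiv 3\bmod 4$ means $p-1=2m$ with $m$ odd, so $(p-1)/2=m$ is odd, and consequently $d$ is odd.

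I do not expect any real obstacle here: the only substantive input is Euler's criterion together with the elementary observation that a divisor of an odd number is odd. The slightly delicate point is the $2$-adic bookkeeping in the first case, where one must notice that the two opposing divisibility conditions pin down $v_2(d)$ exactly rather than just bounding it on one side. Everything else is routine.
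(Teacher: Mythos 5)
Your proof is correct, and for the second claim it coincides exactly with the paper's: Euler's criterion (the paper cites it as \cite[Lemma 6.2.2]{huffman03}) gives $\ord_p q \mid \frac{p-1}{2}$, and $\frac{p-1}{2}$ is odd when $p\equiv 3 \bmod 4$. For the first claim, however, your argument is genuinely different from --- and in fact repairs --- the paper's. The paper deduces from $\ord_p q \nmid \frac{p-1}{2}$ that ``from Fermat's Theorem $\ord_p q = p-1$,'' which is a non sequitur: $d \mid p-1$ together with $d \nmid \frac{p-1}{2}$ does not force $d=p-1$. A concrete counterexample to that intermediate claim is $p=13$, $q=5$: here $5$ is a non-residue modulo $13$ and $\ord_{13}5 = 4 \neq 12$. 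The lemma's conclusion $4 \mid \ord_p q$ is nonetheless true, and your $2$-adic bookkeeping is precisely the correct way to get it: $d \mid p-1$ bounds $v_2(d)$ above by $v_2(p-1)$, $d \nmid \frac{p-1}{2}$ bounds it below by $v_2(p-1)$, and $p \equiv 1 \bmod 4$ gives $v_2(p-1)\geq 2$. So your write-up proves the same statement by the same basic input (Euler's criterion plus Lagrange/Fermat), but where the paper's case (1) contains a faulty step that happens not to damage the conclusion, yours is airtight.
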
  
\begin{proof}
If we assume that $q$ is not a quadratic residue modulo $p$. Hence
from~\cite[Lemma 6.2.2]{huffman03} $\ord_p q$ is not a divisor of $\frac{p-1}{2}$.
Then from Fermat's Theorem $\ord_p q=p-1$, hence $\ord_p q\equiv 0
\mod 4$, since $p \equiv 1 \mod 4$. 

If $q=\square \mod p$, hence from~\cite[Lemma
  6.2.2]{huffman03} $\ord_pq$ is a divisor of $\frac{p-1}{2}$. Since,
$p\equiv 3 \mod 4$, then $\frac{p-1}{2}$ odd which implies $\ord_p q$
is also odd.     
\end{proof}

Assume, that the following equation 
\begin{equation}
\label{eq:equiv}
2+ \gamma ^{2} n =0
\end{equation}
has a solution in $\mathbb{F}_{q}$. 
If $\mathsf{a}=(a_0,\ldots,a_{n-1}) \in \mathbb{F}_{q}^{n}$, define
$$\widetilde{\mathsf{a}}=(a_0,\ldots,a_{n-1},a_{\infty},a_*) \in
\mathbb{F}_{q}^{n+1},$$
where $$a_{\infty}=\gamma \sum_{i=0}^{\frac{n-1}{2}} (-1)^ia_{2i},
\,\,\,a_*=\gamma \sum_{i=0}^{\frac{n-1}{2}} (-1)^ia_{2i+1}.$$
The set $\widetilde{C}=\{\widetilde{\mathsf{a}}=(a_0,\ldots,a_{n-1},a_{\infty},a_*) \in
\mathbb{F}_{q}^{n+1}: (a_0,\ldots,a_{n-1}) \in C\}$ is
a linear code of $\mathbb{F}_{q}$. 
\begin{lemma}\label{Thm:12.black}
(\cite[Theorem 12]{blackford})

Let $q$ be a prime power and $\gamma$ is a solution of the 
equation~(\ref{eq:equiv}) in $\mathbb{F}_{q}$, and suppose that $D_1$ and $D_2$ are odd-like
negacyclic duadic codes of length $n=2p^t$, with multiplier $\mu_{-1}$ of type $II$. Then
$\widetilde{D_i}$ for $i=1,2$ are self-dual. 
\end{lemma}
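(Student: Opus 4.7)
The plan is to establish self-orthogonality of $\widetilde{D_i}$ and then conclude self-duality by a dimension count. Because $D_i$ is odd-like duadic with defining set $S_i$ of cardinality $p^t-1$, we have $\dim D_i = n-(p^t-1) = (n+2)/2$, and since the extension map is clearly injective, $\dim \widetilde{D_i} = (n+2)/2$, exactly half of the ambient length $n+2$. So self-orthogonality alone will imply self-duality.

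The conceptual step is to reinterpret the two new coordinates as evaluations of $a(x)$ at the two \emph{missing} spectral points of $X = \{n/2,\, 3n/2\}$. Set $\alpha = \delta^{n/2}$; since $n/2 = p^t$ is odd, $\alpha^2 = \delta^n = -1$, so $\alpha$ is a primitive fourth root of unity and $\delta^{3n/2} = -\alpha$. A direct expansion gives
\begin{equation*}
a(\pm\alpha) = A_e \pm \alpha A_o, \qquad A_e := \sum_i (-1)^i a_{2i}, \qquad A_o := \sum_i (-1)^i a_{2i+1},
\end{equation*}
so $a_\infty = \gamma A_e$ and $a_* = \gamma A_o$ jointly record the pair $(a(\alpha), a(-\alpha))$. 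A short computation using $\alpha^2=-1$ then yields
\begin{equation*}
a_\infty b_\infty + a_* b_* \;=\; \gamma^2\bigl(A_e^{\mathbf a} A_e^{\mathbf b} + A_o^{\mathbf a} A_o^{\mathbf b}\bigr) \;=\; \tfrac{\gamma^2}{2}\bigl[a(\alpha)b(-\alpha) + a(-\alpha)b(\alpha)\bigr].
\end{equation*}

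For the ambient inner product I would use an inverse DFT over the $n$ odd exponents in $O_{2n}$: a brief character-sum calculation shows
\begin{equation*}
\langle \mathbf a, \mathbf b\rangle \;=\; \frac{1}{n} \sum_{k \in O_{2n}} a(\delta^k)\, b(\delta^{-k}).
\end{equation*}
For $\mathbf a, \mathbf b \in D_i$, the term with $k \in S_i$ vanishes because $a(\delta^k)=0$, and the term with $k \in -S_i$ vanishes because $b(\delta^{-k})=0$. Since $\mu_{-1}$ gives the splitting we have $-S_i = S_{i+1 \bmod 2}$, so $S_1 \cup S_2 = O_{2n}\setminus X$ is fully accounted for and only $k\in X$ contributes. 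Combining with the previous identity,
\begin{equation*}
\langle \widetilde{\mathbf a}, \widetilde{\mathbf b}\rangle \;=\; \Bigl(\frac{1}{n} + \frac{\gamma^2}{2}\Bigr)\bigl[a(\alpha)b(-\alpha) + a(-\alpha)b(\alpha)\bigr],
\end{equation*}
and this scalar vanishes precisely because $\gamma$ satisfies $2 + \gamma^2 n = 0$. Thus $\widetilde{D_i}$ is self-orthogonal, and by the dimension count it is self-dual.

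The main obstacle is setting up the character-sum identity for $\langle \mathbf a, \mathbf b\rangle$ in the odd-exponent (negacyclic) setting and then identifying the two extension functionals with the evaluations at the two missing points of $X$. Once this bridge is in place, the $\mu_{-1}$-splitting hypothesis immediately forces the reduction to $k\in X$, and the defining relation on $\gamma$ is tailor-made to cancel the surviving contribution.
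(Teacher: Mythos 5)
Your proof is correct, and it is essentially the standard argument: the paper itself states this lemma without proof, citing it as Theorem~12 of Blackford, whose proof (and the paper's own draft of the Hermitian analogue, via Blackford's Lemma~1) rests on exactly the same two ingredients you use — the spectral identity $\langle \mathsf{a},\mathsf{b}\rangle=\frac{1}{n}\sum_{k\in O_{2n}}a(\delta^k)b(\delta^{-k})$, which for codewords of $D_i$ collapses to the two points of $X=\{n/2,3n/2\}$ because $\mu_{-1}$ interchanges $S_1$ and $S_2$, together with the identification of $a_\infty,a_*$ with $\gamma$ times the even/odd evaluations at a primitive fourth root of unity, so that $2+\gamma^2 n=0$ kills the surviving contribution. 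Your added dimension count making self-orthogonality suffice is a clean way to finish and matches the intended conclusion.
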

\begin{lemma}
\label{lem:sol}
Let $q,p$ be odd prime such that $q\equiv p \equiv 3 \mod 4$,
$n=2p^t$, with $t$ odd. Hence the equation~(\ref{eq:equiv}) has a solution in $\mathbb{F}_{q}$ 
\end{lemma}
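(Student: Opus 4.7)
The plan is a direct algebraic reduction to a known special case. Substituting $n=2p^t$ into~(\ref{eq:equiv}) and dividing through by $2$, I would first rewrite the target equation as
$$1+\gamma^2 p^t = 0,$$
so that finding a solution in $\mathbb{F}_q$ amounts to exhibiting a square root of $-p^{-t}$ in $\mathbb{F}_q$. This is where the hypothesis that $t$ is odd enters: writing $t=2s+1$, we have $p^{-t}=p^{-1}\cdot (p^{-s})^2$, so $-p^{-t}$ is a square in $\mathbb{F}_q$ if and only if $-p^{-1}$ is, equivalently if and only if the equation
$$1+\beta^2 p=0$$
has a solution $\beta\in\mathbb{F}_q$. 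Given such a $\beta$, I would recover $\gamma=\beta\,p^{-s}$ as an explicit solution of~(\ref{eq:equiv}).

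The second step is to invoke~\cite[Lemma 6.6.17]{huffman03} — the same tool used throughout this section, e.g.\ in the proof of Lemma~\ref{lem:order} — which asserts precisely that $1+\beta^2 p=0$ is solvable in $\mathbb{F}_q$ whenever $p\equiv q\equiv 3\bmod 4$. Since this congruence is our standing hypothesis, the cited lemma applies and produces the needed $\beta$.

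The one spot that genuinely uses something nontrivial is the parity condition on $t$: if $t$ were even, the reduction would end at $\beta^2=-1$, which has no solution in $\mathbb{F}_q$ under $q\equiv 3\bmod 4$, so the argument would fail. Thus the hypothesis ``$t$ odd'' is exactly what makes the powers of $p$ absorb cleanly into a square, reducing the $p^t$ case to the $p$ case. I expect this to be the only subtle point; the rest is routine manipulation, and the whole proof should amount to a short paragraph invoking~\cite[Lemma 6.6.17]{huffman03}.
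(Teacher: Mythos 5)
Your proof is correct and follows essentially the same route as the paper: both reduce equation~(\ref{eq:equiv}) with $n=2p^t$ to the base equation $1+\beta^2p=0$, solve that via \cite[Lemma 6.6.17]{huffman03} under $p\equiv q\equiv 3\bmod 4$, and use the oddness of $t$ to lift the solution (the paper takes $\gamma^t$ where you take $\beta p^{-s}$, an immaterial difference since $(\gamma^t)^2=(-1/p)^t=-1/p^t$ exactly as $(\beta p^{-s})^2=-1/p^t$).
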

\begin{proof}
There is a solution for the equation $2+2p\gamma^2=0$ in $\mathbb{F}_q$ if and only
if there is a solution of $1+p\gamma^2=0$ in $\mathbb{F}_q$. If we
assume $p \equiv 3 \mod 4$, the last equation has a solution $\gamma
\in  \mathbb{F}_q $ from~\cite[Lemma 6.6.17]{huffman03}. If $t$ is odd
 $\gamma^t$ is a solution of the equation~(\ref{eq:equiv}). 
\end{proof} 
\begin{theorem}
\label{Th:main.nega}
 Let $p,q,$ be two odd primes such that $q=\square \mod p$, $q\equiv p
 \equiv 3 \mod 4$ and $z=1$. Then there exists negacyclic duadic codes of length
 $n=2p^t$, $t$ odd, with
 splitting of type $II$ given by $\mu_{-1}$, and such that
 $\widetilde{D_i}$ are self-dual for $i=1$ and $2$.  
\end{theorem}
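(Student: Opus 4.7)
The plan is to assemble the conclusion from three separate pieces already proved in the paper: (a) existence of the type $II$ splitting given by $\mu_{-1}$, (b) existence of a solution to the extension equation~(\ref{eq:equiv}), and (c) the general extension result of Lemma~\ref{Thm:12.black}. Once those three ingredients are in place, the theorem falls out immediately.

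First, I would verify that $\mu_{-1}$ actually provides the required type $II$ splitting of $n=2p^t$. By Lemma~\ref{lem:existence.split}(2) this is equivalent to $r=\ord_{2p^t}q\not\equiv 2\bmod 4$. Since $q$ is odd, $\ord_2 q = 1$, so by Remark~\ref{rem:1} we have $r=\ord_{p^t}q$, and the assumption $z=1$ gives $r=p^{t-1}\ord_p q$. The hypothesis $q=\square \bmod p$ combined with $p\equiv 3\bmod 4$ puts us in case (2) of Lemma~\ref{lem:quadra}, yielding $\ord_p q$ odd. Since $p$ is odd, $r=p^{t-1}\ord_p q$ is odd as well, so certainly $r\not\equiv 2\bmod 4$, and the splitting of type $II$ by $\mu_{-1}$ is guaranteed. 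This automatically produces a pair $D_1,D_2$ of odd-like negacyclic duadic codes of length $n=2p^t$.

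Next, I would invoke Lemma~\ref{lem:sol}: the hypotheses $q\equiv p\equiv 3\bmod 4$ and $t$ odd are exactly its assumptions, so the equation $2+\gamma^{2}n=0$ has a solution in $\mathbb{F}_q$. Finally, I would apply Lemma~\ref{Thm:12.black} to the pair $(D_1,D_2)$ with this $\gamma$: since the codes are odd-like negacyclic duadic of length $n=2p^t$ with multiplier $\mu_{-1}$ of type $II$ and $\gamma$ solves~(\ref{eq:equiv}), the extensions $\widetilde{D_i}$ are self-dual for $i=1,2$, which is exactly the conclusion sought.

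I do not expect a real obstacle here; the proof is essentially a careful verification that the chain of hypotheses \,$q=\square\bmod p$, $p\equiv 3\bmod 4$, $z=1$, $t$ odd\, delivers all the preconditions required by Lemmas~\ref{lem:existence.split}, \ref{lem:sol} and \ref{Thm:12.black}. The one place that needs attention is the deduction that $r$ is odd: one must remember to use $z=1$ to write $\ord_{p^t}q=p^{t-1}\ord_p q$ and then combine the parity of $\ord_p q$ (from Lemma~\ref{lem:quadra}) with the oddness of $p$. Everything else is a direct citation.
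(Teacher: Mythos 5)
Your proposal is correct and follows essentially the same route as the paper: Lemma~\ref{lem:quadra}(2) gives $\ord_p q$ odd, Remark~\ref{rem:1} with $z=1$ lifts this to $r=\ord_{2p^t}q=p^{t-1}\ord_p q$ odd so that $\mu_{-1}$ yields the type $II$ splitting, Lemma~\ref{lem:sol} supplies the solution of equation~(\ref{eq:equiv}), and Lemma~\ref{Thm:12.black} concludes self-duality of $\widetilde{D_i}$. In fact your write-up is slightly more careful than the paper's, which compresses the first two steps by asserting directly that $\ord_{2p^t}q$ is odd without spelling out the use of $z=1$ and the oddness of $p$.
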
 
\begin{proof}
If we have such $p$ and $q$ from Lemma~\ref{lem:quadra} the $\ord_{2p^t}q$
is odd. Hence from Remark~\ref{rem:1} $\mu_{-1}$ gives a splitting of
$n$ of type $II$. Furthermore, from Lemma~\ref{lem:sol} the
equation~(\ref{eq:equiv}) has a solution in $\mathbb{F}_q$. Hence from
Lemma~\ref{Thm:12.black} The codes $D_i$ are extended to self-dual
codes $\widetilde{D_i}$, for $i=1$ and $2$. 
\end{proof}

If we assume $ q\equiv 3 \mod 4$, $p \equiv 1 \mod 4$, $z=1$ and $q$ not
residue quadratic modulo $p$, hence from Lemma~\ref{lem:quadra} the
$\ord_{2p^t}q\equiv 0 \mod 4$. Hence from Remark~\ref{rem:1}
$\mu_{-1}$ gives a splitting of $n$ of type $II$. Hence there exist
duadic negacyclic codes. Unfortunately in this case we cannot extend
the codes in order to get self-dual codes as we done in
Theorem~\ref{Th:main.nega}. That is because simply the equation~(\ref{eq:equiv}) 
 has no solution in $\mathbb{F}_q$. Namely, a solution will implies
 that $(\frac{-p^t}{q})=1$. But, we have 
 $(\frac{-p^t}{q})= (\frac{-1}{q})(\frac{p^t}{q}).$
Since $q \equiv 3 \mod 4$, hence $-1$ is not a quadratic residue modulo
$q$. Then  $(\frac{-p^t}{q})= -(\frac{p^t}{q})$. Furthermore, from the
law of quadratic reciprocity, we have
$(\frac{p}{q})(\frac{q}{p})=(-1)^{\frac{p-1}{2}\frac{q-1}{2}}=-1$.
 As $(\frac{q}{p})=-1$, this implies $(\frac{p}{q})=1$. Hence
 $(\frac{p^t}{q})=1$.

Gulliver and Harada~\cite{gulliver09} proved the existence of $MDS$
self-dual codes of length $18$  over  $\mathbb{F}_p$, whenever $17\leq
97$. But when $101\leq p \leq 300$ they gave quasi-twisted self-dual
$[18,9,9]_p$ from unimoduallar lattices~\cite[Table 3]{gulliver09}.
 In the following table we give some examples of
$MDS$ self-dual codes of length $18$, for $p \geq 101$.  
\begin{table}
\begin{center}
\begin{tabular}{|c|c|}
\hline $p$ & Arguments \\
\hline
109& Theorem~\ref{th:Euc12} \\
137& Theorem~\ref{th:main2}\\
181& Theorem~\ref{th:Euc12} \\
197& Theorem~\ref{Cor:Euc11} \\
233& Theorem~\ref{Cor:Euc11} \\
269& Theorem~\ref{Cor:Euc11} \\
%293 &Theorem \ref{Cor:Euc11} \\
\hline
\end{tabular}
\end{center}
\caption{$MDS$ self-dual $[18,9,10]$ over $\mathbb{F}_p$}
\end{table}
%\begin{table}
%\begin{center}
%\begin{tabular}{c|cccc|cccc|ccc|}
%\hline $n$ &$q^E$=&41&49&53&$q^H=$&9&49&81&$q^E=$8&64& \\
%\hline
%6&&4&&4&&4&&&4&&\\
%\hline
%8&&&&&&&&&5(4)&5&\\
%\hline
%10&&6&6&&&6(5)&&6&&6(5)&\\
%\hline
%12&&&7&&&&7&&&&\\
%\hline
% \end{tabular}
%\end{center}
%\caption{$MDS$ self-dual compared to the some codes of~\cite[Table1]{kim}}
%\end{table}
\section*{Acknowledgment}
The author would like to thank Professor Thierry P. Berger for help
and encouragement in carrying out the research in this paper.

{\small

\end{document}